\newtheorem{theorem}{Theorem}
\newtheorem{lemma}[theorem]{Lemma}
\theoremstyle{definition}
\newtheorem{corollary}[theorem]{Corollary}
\author{
\parbox{\linewidth}{\centering
Heng Chang\thanks{Work done while interning at Tencent AI Lab.},\textsuperscript{\rm 1}
Yu Rong,\textsuperscript{\rm 2}
Tingyang Xu,\textsuperscript{\rm 2}
Wenbing Huang,\textsuperscript{\rm 2}
Honglei Zhang,\textsuperscript{\rm 2} \\
Peng Cui,\textsuperscript{\rm 3}
Wenwu Zhu\thanks{Wenwu Zhu is the corresponding author.},\textsuperscript{\rm 1,3}
Junzhou Huang\textsuperscript{\rm 2} } \\
\textsuperscript{\rm 1}Tsinghua-Berkeley Shenzhen Institute, Tsinghua University, China \\
\textsuperscript{\rm 2}Tencent AI Lab, China \\
\textsuperscript{\rm 3}Department of Computer Science and Technology, Tsinghua University, China \\
changh17@mails.tsinghua.edu.cn, yu.rong@hotmail.com, Tingyangxu@tencent.com, hwenbing@126.com, \\ zhanghonglei@gatech.edu, cuip@tsinghua.edu.cn, wwzhu@tsinghua.edu.cn, jzhuang@uta.edu
}
\title{A Restricted Black-box Adversarial Framework Towards Attacking Graph Embedding Models}
\begin{document}

\maketitle

\begin{abstract}
With the great success of graph embedding model on both academic and industry area, the robustness of graph embedding against adversarial attack inevitably becomes a central problem in graph learning domain. Regardless of the fruitful progress, most of the current works perform the attack in a white-box fashion: they need to access the model predictions and labels to construct their adversarial loss. However, the inaccessibility of model predictions in real systems makes the white-box attack impractical to real graph learning system. 
This paper promotes current frameworks in a more general and flexible sense -- we demand to attack various kinds of graph embedding model with black-box driven. To this end, we begin by investigating the theoretical connections between graph signal processing and graph embedding models in a principled way and formulate the graph embedding model as a general graph signal process with corresponding graph filter. As such, a generalized adversarial attacker: \textit{GF-Attack} is constructed by the graph filter and feature matrix. Instead of accessing any knowledge of the target classifiers used in graph embedding, \textit{GF-Attack} performs the attack only on the graph filter in a black-box attack fashion. To validate the generalization of \textit{GF-Attack}, we construct the attacker on four popular graph embedding models. Extensive experimental results validate the effectiveness of our attacker on several benchmark datasets.
Particularly by using our attack, even small graph perturbations like one-edge flip is able to consistently make a strong attack in performance to different graph embedding models.

\end{abstract}

\section{Introduction}

\begin{figure*}[htb]
\centering
\includegraphics [width=0.9\textwidth]{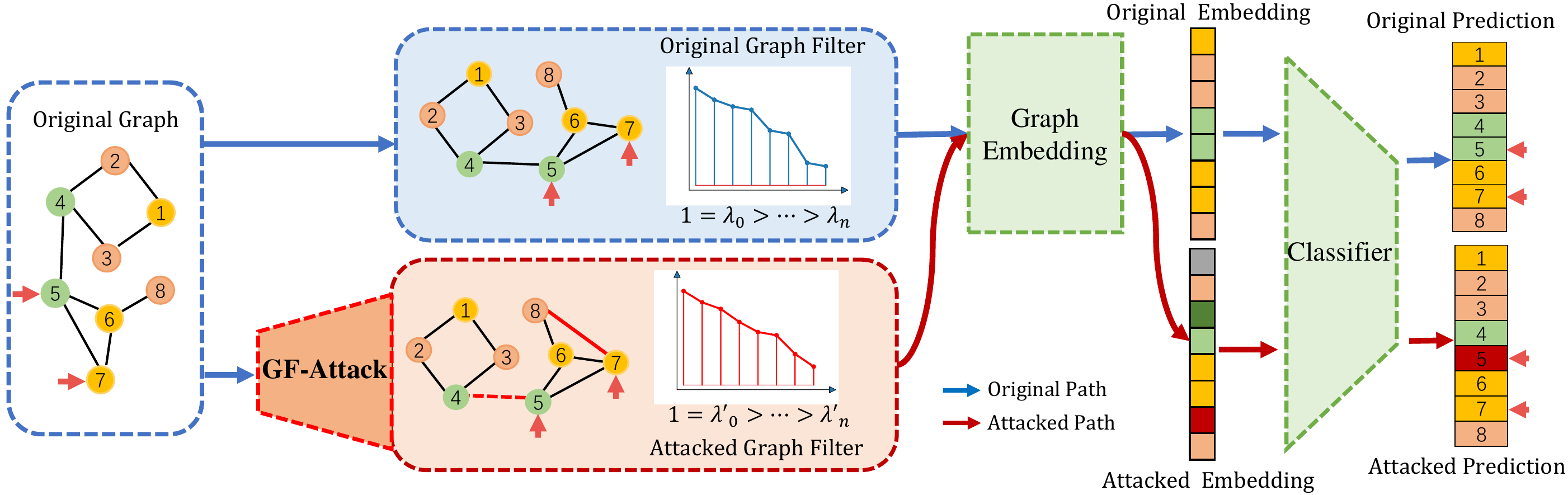}

\caption{The overview of whole attack procedure of \textit{GF-Attack}. Given target vertices $5$ and $7$, \textit{GF-Attack} aims to misclassify them by attacking the graph filter and producing adversarial edges (edge $e_{45}$ deleted and edge $e_{78}$ added ) on graph structure. The common graph embedding block refers to the general target GNN model and can be any kind of potential GNN models, illustrating the flexibility and extensibility of GF-Attack. In this vein, GF-Attack would not change the target embedding model.}
\label{fig.attackoverview}
\end{figure*}

Graph embedding models ~\cite{scarselli2009GNN,cui2018survey}, which elaborate the expressive power of deep learning on graph-structure data, have achieved promising success in various domains, such as predicting properties over molecules~\cite{duvenaud2015convolutional},  biology analysis~\cite{Hamilton2017Inductive}, financial surveillance~\cite{paranjape2017motifs} and structural role classification~\cite{tu2018deep}.
Given the increasing popularity and success of these methods, a bunch of recent works have posed the risk of graph embedding models against adversarial attacks, just like what the researchers are anxious for convolutional neural networks~\cite{akhtar2018threat}. A strand of research works~\cite{ICML2018Adversarial,KDD2018Adversarial,icml2019adversarial} have already shown that various kinds of graph embedding methods, including Graph Convolutional Networks, DeepWalk, etc., are vulnerable to adversarial attacks. 
Undoubtedly, the potential attacking risk is rising for modern graph learning systems. For instance, by sophisticated constructed social bots and following connections, it's possible to fool the recommendation system equipped with graph embedding models to give wrong recommendations. 

Regarding the amount of information from both target model and data required for the generation of adversarial examples, all graph adversarial attackers fall into three categories (arranged in an ascending order of difficulties):
\begin{itemize}
    \item White-box Attack (\textbf{WBA}): the attacker can access any information, namely, the training input (e.g., adjacency matrix and feature matrix), the label, the model parameters, the predictions, etc.
    \item Practical White-box Attack (\textbf{PWA}): the attacker can any information except the model parameters. 
    \item Restrict Black-box Attack (\textbf{RBA}): the attacker can only access the training input and limited knowledge of the model. The access of parameters, labels and predictions is prohibited. 
\end{itemize}

Despite the fruitful results \cite{sun2018adversarial,KDD2018Adversarial,ICLR2019Meta} which absorb ingredients from exiting adversarial methods on convolutional neural networks, obtained in attacking graph embeddings under both WBA and PWA setting, however, the target model parameter as well as the labels and predictions are seldom accessible in real-life applications. In the other words, the WBA and PWA attackers are almost impossible to perform a threatening attack to real systems. Meanwhile, current RBA attackers are either reinforcement learning based \cite{ICML2018Adversarial}, which has low computational efficiency and is limited to edge deletion, or derived merely only from the structure information without considering the feature information~\cite{icml2019adversarial}. Therefore, how to perform the effective adversarial attack toward graph embedding model relying on the training input, a.k.a., RBA setting, is still more challenging yet meaningful in practice.

The core task of the adversarial attack on graph embedding model is to damage the quality of output embeddings to harm the performance of downstream tasks within the manipulated features or graph structures, i.e., vertex or edge insertion/deletion. Namely, finding the embedding quality measure to evaluate the damage of embedding quality is vital. For the WBA and PWA attackers, they have enough information to construct this quality measure, such as the loss function of the target model. In this vein, the attack can be performed by simply maximize the loss function reversely, either by gradient ascent \cite{ICML2018Adversarial} or a surrogate model \cite{KDD2018Adversarial,ICLR2019Meta} given the known labels. However, the RBA attacker can not employ the limited information to recover the loss function of the target model, even constructing a surrogate model is impossible. In a nutshell, the biggest challenge of the RBA attacker is: how to figure out the goal of the target model barely by the training input.

In this paper,  we try to understand the graph embedding model from a new perspective and propose an attack framework: \textit{GF-Attack}, which can perform adversarial attack on various kinds of graph embedding models. Specifically, we formulate the graph embedding model as a general graph signal processing with corresponding graph filter which can be computed by  the input adjacency matrix. Therefore, we employ the graph filter as well as feature matrix to construct the embedding quality measure as a $T$-rank approximation problem. In this vein, instead of attacking the loss function, we aim to attack the graph filter of given models. It enables \textit{GF-Attack} to perform attack in a restrict black-box fashion. Furthermore, 
by evaluating this $T$-rank approximation problem, \textit{GF-Attack} is capable to perform the adversarial attack on any graph embedding models which can be formulate to a general graph signal processing. Meanwhile, we give the quality measure construction for four popular graph embedding models (GCN, SGC, DeepWalk, LINE). Figure~\ref{fig.attackoverview} provides the overview of whole attack procedure of \textit{GF-Attack}. Empirical results show that our general attacking method is able to effectively propose adversarial attacks to popular unsupervised/semi-supervised graph embedding models on real-world datasets without access to the classifier.

\section{Related work}\label{sec.related}
For explanation of graph embedding models, \cite{xu2018how} and \cite{WSDM2018NetworkEmbedding} show some insights on the understanding of Graph Convolutional Networks and sampling-based graph embedding, respectively. However, they focus on proposing new graph embedding frameworks in each type of methods rather than building up a theoretical connection.

Only recently adversarial attacks on deep learning for graphs have drawn unprecedented attention from researchers.
\mbox{\cite{ICML2018Adversarial}} exploits a reinforcement learning based framework under RBA setting. However, they restrict their attacks on edge deletions only for node classification, and do not evaluate the transferability. \cite{KDD2018Adversarial} proposes attacks based on a surrogate model and they can do both edge insertion/deletion in contrast to \cite{ICML2018Adversarial}. But their method utilizes additional information from labels, which is under PWA setting.
Further, \cite{ICLR2019Meta} utilizes meta-gradients to conduct attacks under black-box setting by assuming the attacker uses a surrogate model same as \cite{KDD2018Adversarial}. Their performance highly depends on the assumption of the surrogate model, and also requires label information. Moreover, they focus on the global attack setting.
\cite{xu2019topology} also proposes a gradient-based method under WBA setting and overcomes the difficulty brought by discrete graph structure data.
\cite{icml2019adversarial} considers a different adversarial attack task on vertex embeddings under RBA setting. Inspired by \cite{WSDM2018NetworkEmbedding}, they maximize the loss obtained by DeepWalk with matrix perturbation theory while only consider the information from adjacent matrix.
In contrast, we focus on semi-supervised learning on node classification combined with features. Remarkably, despite all above-introduced works except \cite{ICML2018Adversarial} show the existence of transferability in graph embedding methods by experiments, they all lack theoretical analysis on the implicit connection. In this work, for the first time, we theoretically connect different kinds of graph embedding models and propose a general optimization problem from parametric graph signal processing. An effective algorithm is developed afterwards under RBA setting.

\section{Preliminary}\label{Background}
Let $G(\mathcal{V},\mathcal{E})$ be an attributed graph, where $\mathcal{V}$ is a vertex set with size $n = |\mathcal{V}|$ and $\mathcal{E}$ is an edge set.  Denote $A \in \{0,1\}^{n \times n}$ as an adjacent matrix containing information of edge connections and $X \in \mathbb{R}^{n \times l}$ as a feature matrix with dimension $l$\deleted[id==RR]{ for vertices}. $D_{ii} = \sum_{j}A_{ij}$ refers the degree matrix. $\text{vol}(G) = \sum_{i}\sum_{j}A_{ij} = \sum_{i}D_{ii}$ denotes the volume of $G$. For consistency, we denote the perturbed adjacent matrix as $A'$ and the normalized adjacent matrix as $\hat{A} = D^{-\frac{1}{2}}AD^{-\frac{1}{2}}$. Symmetric normalized Laplacian and random walk normalized Laplacian are referred as $L^{sym} = I_n - D^{-\frac{1}{2}}AD^{-\frac{1}{2}}$ and $L^{rw} = I_n - D^{-1}A$, respectively.

Given a graph embedding model $\mathscr{M}_\Theta$ parameterized by $\Theta$ and a graph $G(\mathcal{V}, \mathcal{E})$,  the adversarial attack on graph aims to perturb the learned vertex representation $Z = \mathscr{M}_{\Theta}(A, X)$ to damage the performance of the downstream learning tasks. There are three components in graphs that can be attacked as targets:
\begin{itemize}
  \item Attack on $\mathcal{V}$: Add/delete vertices in graphs. This operation may change the dimension of the adjacency matrix $A$.
  \item Attack on $A$: Add/delete edges in graphs. This operation would lead to the changes of entries in the adjacency matrix $A$. This kind of attack is also known as \emph{structural attack}.
  \item Attack on $X$: Modify the attributes attached on vertices.  
\end{itemize}
Here, we mainly focus on adversarial attacks on graph structure $A$, since attacking $A$ is more practical than others in real applications \cite{CIKM2012Gelling}.

\subsection{Adversarial Attack Definition}
Formally, given a fixed budget $\beta$ indicating that the attacker is only allowed to modify $2\beta$ entries in $A$ (undirected), the adversarial attack on a graph embedding model $\mathscr{M}_\Theta$ can be formulated as \cite{icml2019adversarial}:
\begin{align}\label{equ.problemdef}
    \arg\max\limits_{A'}  & \,\, \mathscr{L}(A', Z)\\ 
    \notag\text{s.t.}\; &\notag Z = \mathscr{M}_{\Theta}(A', X), \\ \Theta^{*} &= \arg\min_{\Theta}\mathcal{L}(\Theta; A', X),
                        \notag\| A' - A\| = 2\beta,
\end{align}
where $Z$ is the embedding output of the model $\mathscr{M}_{\Theta}$ and $\mathcal{L}(\cdot, \cdot)$ is the loss function minimized by $\Theta$. $\mathscr{L}(A', Z)$ is defined as the loss measuring the attack damage on output embeddings, lower loss corresponds to higher quality. For the WBA, $\mathscr{L}(A', Z)$ can be defined by the minimization of the target loss, i.e.,  $\mathscr{L}(A', Z) = \inf\limits_{\Theta}\mathcal{L} (A', Z)$. This is a bi-level optimization problem if we need to re-train the model during attack. Here we consider a more practical scenario: $\Theta^{*} = \arg\min_{\Theta}\mathcal{L}(\Theta; A, X)$ are learned on the clean graph and remains unchanged during attack.

\section{Methodologies}\label{sec.GSPG}
Graph Signal Processing (GSP) focuses on analyzing and processing data points whose relations are modeled as graph \cite{shuman2013GSP,ortega2018graph}.  Similar to Discrete Signal Processing, these data points can be treated as \emph{signals}. Thus the definition of \emph{graph signal} is a mapping from vertex set $\mathcal{V}$ to real numbers $\mathbf{x}: \mathcal{V} \rightarrow \mathbb{R}$. In this sense, the feature matrix $X$ can be treated as graph signals with $l$ channels.
From the perspective of GSP, we can formulate graph embedding model $\mathscr{M}:(A, X) \to \mathbb{R}^{n \times d}$ as the generalization of signal processing. Namely, A graph embedding model can be treated as producing the new graph signals according to graph filter $\mathscr{H}$ together with feature transformation: 
\begin{equation}
    \tilde{X} = \mathscr{H}(X), \,
    X' =  \sigma(\tilde{X}\Theta),
    \label{equ.GF-Attack}
\end{equation}
where $\mathscr{H}$ denotes a graph signal filter, $\sigma(\cdot)$ denotes the activation function of neural networks, and $\Theta \in \mathbb{R}^{l \times l'}$ denotes a convolution filter from $l$ input channels to $l'$ output channels. $\mathscr{H}$ can be constructed by a polynomial function $h(x)=\sum_{i=0}^La_ix^i \in \mathbb{R}^{n \times n}$ with graph-shift filter $S$, i.e., $\tilde{X} =h(S)X$. Here, the graph-shift filter $S$ reflects the locality property of graphs, i.e., it represents a linear transformation of the signals of one vertex and its neighbors. It's the basic building blocks to construct $\mathscr{H}$. Some common choices of $\mathscr{H}$ include the adjacency matrix $A$ and the Laplacian $L=D - A$. We call this general model \emph{Graph Filter Attack~(GF-Attack)}. \textit{GF-Attack} introduces the trainable weight matrix $\Theta$ to enable stronger expressiveness which can fuse the structural and non-structural information. 

\subsection{Embedding Quality Measure $\mathscr{L}(A', Z)$ of \textit{GF-Attack}}
According to~\eqref{equ.GF-Attack}, in order to avoid accessing the target model parameter $\Theta$, we can construct the restricted black-box attack loss $\mathscr{L} (A', Z)$ by attacking the graph filter $\mathscr{H}$. 
Recent works \cite{yang2015network,nar2019cross} demonstrate that the output embeddings of graph embedding models can have very low-rank property.
Since our goal is to damage the quality of output embedding $Z$, we establish the general optimization problem accordingly as a $T$-rank approximation problem inspired from \cite{WSDM2018NetworkEmbedding}:
\begin{align}
\mathscr{L}(A', Z) =  \|h(S')X - h(S')_TX \|_F^2,
\end{align}
where $h(S')$ is the polynomial graph filter, $S'$ is the graph shift filter constructed from the perturbed adjacency matrix $A'$. $ h(S')_T$ is the $T$-rank approximation of $h(S')$. According to low-rank approximation, $\mathscr{L}(A', Z)$ can be rewritten as:
\begin{equation}
\small{
\mathscr{L}(A', Z) = \| \sum_{i = T + 1}^{n} \lambda'_{i} \mathbf{u}_{i}\mathbf{u}_{i}^{T}X \|_{F}
\leq \sum_{i = T + 1}^{n} {\lambda'_{i}}^{2} \cdot \sum_{i = T + 1}^{n} \|\mathbf{u}_{i}^{T}X\|_2^2,
\label{equ.loss}
}
\end{equation}
where $n$ is the number of vertices. $h(S')=  U\Lambda U^{\text{T}}$ is the eigen-decomposition of the graph filter $h(S')$. $h(S')$ is a symmetric matrix. $\Lambda= diag(\lambda_1,\cdots,\lambda_n)$, $U= [\mathbf{u}_{1}^{T},\cdots,\mathbf{u}_{n}^{T}]$ are the eigenvalue and eigenvector of graph filter $\mathscr{H}$, respectively, in order of $ \lambda_{1} \geq \lambda_{2} \geq \dots \geq \lambda_{n}$. $\lambda'_{i}$ is the corresponding eigenvalue after perturbation. While $\| \sum_{i = T + 1}^{n} \lambda_{i} \mathbf{u}_{i}\mathbf{u}_{i}^{T}X \|$ is hard to optimized, from~\eqref{equ.loss}, we can compute the upper bound instead of minimizing the loss directly. Accordingly, the goal of adversarial attack is to maximize the upper bound of the loss reversely. Thus the restrict black-box adversarial attack is equivalent to optimize:
\begin{align}
\notag\arg\max\limits_{A'} & \sum_{i = T + 1}^{n} {\lambda'_{i}}^{2} \cdot \sum_{i = T + 1}^{n} \|\mathbf{u}^{T}_{i}X\|_2^2,\\
\text{s.t.}& \; \| A' - A\| = 2\beta.
\label{equ.attackall}
\end{align}
Now our adversarial attack model is a general attacker. Theoretically, we can attack any graph embedding model which can be described by the corresponding graph filter $\mathscr{H}$. Meanwhile, our general attacker provides theoretical explanation on the transferability of adversarial samples created by \cite{KDD2018Adversarial,ICLR2019Meta,icml2019adversarial}, since modifying edges in adjacent matrix $A$ implicitly perturbs the eigenvalues of graph filters. In the following, we will analyze two kinds of popular graph embedding methods and aim to perform adversarial attack according to \eqref{equ.attackall}.
\subsection{\textit{GF-Attack} on Graph Convolutional Networks}
Graph Convolution Networks extend the definition of convolution to the irregular graph structure and learn a representation vector of a vertex with feature matrix $X$. Namely, we generalize the Fourier transform to graphs to define the convolution operation: $g_{\theta} \ast \mathbf{x} = U g_{\theta} U^{T} \mathbf{x}$. To accelerate calculation, ChebyNet \cite{Defferrard2016ChebNet} proposed a polynomial filter $g_\theta(\Lambda) = \sum_{k=0}^{K}\theta_k\Lambda^k$ and approximated $g_{\theta}(\Lambda)$ by a truncated expansion concerning Chebyshev polynomials $T_{k}(x)$:

\begin{equation}
g_{\theta'} \ast \mathbf{x} \approx \sum_{k = 0}^{K} \theta'_{k}T_{k}(\widetilde{L})\mathbf{x},
\end{equation}
where $\widetilde{L} = \frac{2}{\lambda_{\text{max}}}L - I_{n}$ and $\lambda_{\text{max}}$ is the largest eigenvalue of Laplacian matrix $L$. $\theta{'} \in \mathbb{R}^{K}$ is now the parameter of Chebyshev polynomials $T_{k}(x)$. $K$ denotes the $K_{\text{th}}$ order polynomial in Laplacian. Due to the natural connection between Fourier transform and single processing, it's easy to formulate ChebyNet to \textit{GF-Attack}:
\begin{lemma}
The $K$-localized single-layer ChebyNet with activation function $\sigma(\cdot)$ and weight matrix $\Theta$ is equivalent to filter graph signal $X$ with a polynomial filter $\mathscr{H} = \sum_{k=0}^{K}T_k(S)$ with graph-shift filter $S = 2\frac{L^{sym}}{\lambda_{max}} - I_n$. $T_k(S)$ represents Chebyshev polynomial of order $k$. Equation~\eqref{equ.GF-Attack} can be rewritten as:
\begin{equation*}
    \notag\tilde{X} = \sum_{k=0}^{K}T_k(2\frac{L^{sym}}{\lambda_{max}} - I_n)X,
    \,\,\,\, X' =  \sigma(\tilde{X}\Theta).
\end{equation*}
\label{lemma.chebynet}
\end{lemma}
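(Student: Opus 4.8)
The plan is to start from the Chebyshev-polynomial approximation of the spectral convolution quoted immediately above the statement and rewrite it channel-by-channel so that it matches the two-step form of Equation~\eqref{equ.GF-Attack}. First I would pin down the graph-shift filter. The truncated ChebyNet expansion reads $g_{\theta'}\ast\mathbf{x}\approx\sum_{k=0}^{K}\theta'_{k}T_{k}(\widetilde{L})\mathbf{x}$ with $\widetilde{L}=\frac{2}{\lambda_{\max}}L-I_{n}$; in the ChebyNet/GCN setting the relevant Laplacian is the symmetric normalized one, so I identify $\widetilde{L}$ with $S=2\frac{L^{sym}}{\lambda_{\max}}-I_{n}$, which is the switch the lemma makes explicit. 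Because each $T_{k}(S)$ is a polynomial in $S$, the operator $\mathscr{H}=\sum_{k=0}^{K}T_{k}(S)$ is itself a polynomial graph filter of the form $h(S)$ required by the GF-Attack framework, and in particular it is symmetric and shares its eigenbasis with $S$.

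Next I would lift the single-signal filter to the multi-channel feature matrix $X$. Treating each column of $X$ as an independent graph signal, the $K$-localized filtering acts as $\sum_{k=0}^{K}T_{k}(S)X$, after which the single-layer ChebyNet mixes channels through a trainable map and applies the nonlinearity. The key step is to observe that the scalar Chebyshev coefficients $\theta'_{k}$ together with the channel-mixing weights can be collected into one trainable matrix $\Theta$; this is exactly the GF-Attack design choice of separating the fixed, data-driven structural filter $\mathscr{H}$ from the learnable feature transformation $\Theta$ introduced in Equation~\eqref{equ.GF-Attack}. Under this identification, $\sum_{k=0}^{K}\theta'_{k}T_{k}(S)X$ becomes $\bigl(\sum_{k=0}^{K}T_{k}(S)\bigr)X\Theta=\mathscr{H}(X)\Theta=\tilde{X}\Theta$, and applying $\sigma(\cdot)$ yields $X'=\sigma(\tilde{X}\Theta)$, which is precisely Equation~\eqref{equ.GF-Attack} with $\tilde{X}=\sum_{k=0}^{K}T_{k}(2\frac{L^{sym}}{\lambda_{\max}}-I_{n})X$.

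I expect the main obstacle to be justifying the coefficient-absorption in the previous step rather than any computation: one must argue that folding the per-order coefficients $\theta'_{k}$ into $\Theta$ is legitimate for the task at hand, i.e. that the attacker only cares about the structural operator $\mathscr{H}$ and is invariant to how the (black-box, inaccessible) weights are parameterized. Making $\mathscr{H}=\sum_{k}T_{k}(S)$ coefficient-free is exactly what later permits the $T$-rank and eigenvalue analysis of Equation~\eqref{equ.attackall} to be applied to ChebyNet without ever touching $\Theta$, so I would emphasize that this absorption is not a loss of generality but the enabling assumption of the restricted black-box setting.
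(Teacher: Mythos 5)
Your proposal is correct and follows essentially the same route as the paper, which simply writes the single-layer ChebyNet as $\sigma\bigl(\sum_{k=0}^{K}\theta'_{k}T_k(2\frac{L^{sym}}{\lambda_{max}} - I_n)X\bigr)$ and reads off $S$ and $\mathscr{H}$ directly. The coefficient-absorption step you rightly flag as the only delicate point is glossed over in the paper's one-line proof of this lemma (it is only acknowledged later, in the SGC corollary, as a reparameterization/approximation), so your explicit justification that the attacker is invariant to how the inaccessible weights are parameterized is a welcome addition rather than a deviation.
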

\begin{proof}
The $K$-localized single-layer ChebyNet with activation function $\sigma(\cdot)$ is $\sigma(\sum_{k=0}^{K}\theta'_{k}T_k(2\frac{L^{sym}}{\lambda_{max}} - I_n)X)$. Thus, we can directly write graph-shift filter as $S = 2\frac{L^{sym}}{\lambda_{max}} - I_n$ and linear and shift-invariant filter $\mathscr{H} = \sum_{k=0}^{K}T_k(S)$.
\end{proof}

GCN \cite{ICLR2017SemiGCN} constructed the layer-wise model by simplifying the ChebyNet with $K=1$ and the \emph{re-normalization trick} to avoid gradient exploding/vanishing:
\begin{eqnarray}
\label{Eq:gcn}
X^{(l+1)} &=& \sigma\left(\tilde{D}^{-\frac{1}{2}}\tilde{A} \tilde{D}^{-\frac{1}{2}}X^{(l)}\Theta^{(l)}\right),
\end{eqnarray}
where $\tilde{A} = A + I_n$ and $\tilde{D}_{ii} = \sum_{j}\tilde{A}_{ij}$. $\Theta=\{\theta^{(l)}_{1},...,\theta^{(l)}_{n} \}$ is the parameters in the $l_{th}$ layer and $\sigma(\cdot)$ is an activation function.

SGC \cite{sgc_icml19} further utilized a single linear transformation to achieve computationally efficient graph convolution, i.e., $\sigma(\cdot)$ in SGC is a linear activation function. We can formulate the multi-layer SGC as \textit{GF-Attack} through its theoretical connection to ChebyNet:
\begin{corollary}
The $K$-layer SGC is equivalent to the $K$-localized single-layer ChebyNet with $K_{th}$ order polynomials of the graph-shift filter $S^{sym}= 2I_n - L^{sym}$. Equation~\eqref{equ.GF-Attack} can be rewritten as:
\begin{equation*}
    \notag\tilde{X} = (2I_n - L^{sym})^{K}X, \,\,\,\, X' =  \sigma(\tilde{X}\Theta).
\end{equation*}
\label{thm.sgc}
\end{corollary}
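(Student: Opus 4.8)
The plan is to exploit the single defining feature of SGC---that it discards the nonlinear activations between consecutive propagation layers---so that the $K$ stacked graph convolutions collapse into one graph-shift filter applied $K$ times, and then to identify this collapsed filter with a $K$-th order polynomial of the ChebyNet graph-shift filter through Lemma~\ref{lemma.chebynet}. In short, I would reduce the layered SGC to a single graph filter, rewrite that filter as a power of $S^{sym}=2I_n-L^{sym}$, and invoke the ChebyNet correspondence already established.

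First I would write out the $K$-layer SGC explicitly. Because SGC replaces each layer's activation by a linear map, the output is $\sigma\big((\tilde{D}^{-\frac{1}{2}}\tilde{A}\tilde{D}^{-\frac{1}{2}})^{K} X\,\Theta^{(1)}\cdots\Theta^{(K)}\big)$, where $\tilde{D}^{-\frac{1}{2}}\tilde{A}\tilde{D}^{-\frac{1}{2}}$ is the single-layer GCN propagation of Eq.~\eqref{Eq:gcn}. Since the intermediate activations are linear, the repeated propagation matrices multiply into $(\tilde{D}^{-\frac{1}{2}}\tilde{A}\tilde{D}^{-\frac{1}{2}})^{K}$ and the per-layer weights contract into one trainable matrix $\Theta:=\Theta^{(1)}\cdots\Theta^{(K)}$. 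This already puts the model into the form $\tilde{X}=(\tilde{D}^{-\frac{1}{2}}\tilde{A}\tilde{D}^{-\frac{1}{2}})^{K}X$, $X'=\sigma(\tilde{X}\Theta)$, matching Eq.~\eqref{equ.GF-Attack}.

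Next I would recast the single propagation matrix as an affine function of $L^{sym}$ so it coincides with the first-order ChebyNet graph-shift filter. Using $\hat{A}=D^{-\frac{1}{2}}AD^{-\frac{1}{2}}=I_n-L^{sym}$, the $K=1$ instance of Lemma~\ref{lemma.chebynet} together with the re-normalization trick identifies the GCN propagation with $I_n+\hat{A}=2I_n-L^{sym}=S^{sym}$, the claimed graph-shift filter. Hence the collapsed SGC filter is $(S^{sym})^{K}=(2I_n-L^{sym})^{K}$, which is a single monomial---and therefore a degree-$K$ polynomial---in $S^{sym}$. Because every degree-$K$ polynomial in $S^{sym}$ lies in the span of the Chebyshev polynomials $\{T_0(S^{sym}),\dots,T_K(S^{sym})\}$, this filter is realizable as a $K$-localized single-layer ChebyNet with graph-shift filter $S^{sym}$, yielding the stated equivalence and the displayed reformulation of Eq.~\eqref{equ.GF-Attack}.

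The routine parts are the algebraic identities $\hat{A}=I_n-L^{sym}$ and $I_n+\hat{A}=2I_n-L^{sym}$. The step deserving the most care is the passage from the per-layer first-order propagation to its $K$-th power: one must justify that stacking $K$ linearized SGC layers equals applying the first-order ChebyNet shift filter $K$ times and collapsing, and that the resulting monomial $(S^{sym})^{K}$ counts as a genuine $K$-th order (Chebyshev) filter rather than a $K$-fold first-order one. This is exactly where the assertion ``equivalent to a $K$-localized single-layer ChebyNet'' is pinned down, and I expect it to be the main obstacle.
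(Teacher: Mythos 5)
Your proposal is correct and follows essentially the same route as the paper: collapse the activation-free layers so the $K$-layer SGC becomes $(2I_n-L^{sym})^{K}X\Theta$, then identify this with a $K$-localized single-layer ChebyNet in the shift filter $S^{sym}=2I_n-L^{sym}$. Your justification of the last step (the monomial $(S^{sym})^{K}$ lies in the span of $T_0(S^{sym}),\dots,T_K(S^{sym})$, hence is realizable by the ChebyNet family) is a slightly cleaner phrasing of what the paper calls a ``reparameterization trick'' absorbing the Chebyshev coefficients into a new $\widetilde{\Theta}$, but it is the same idea.
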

\vspace{-8mm}
\begin{proof}
We can write the $K$-layer SGC as $(2I_n - L^{sym})^{K} X \Theta$. Since $\Theta$ is the learned parameters by the neural network, we can employ the reparameterization trick to use $(2I_{n} - L^{sym})^{K}$ to approximate the same order polynomials $\sum_{k=0}^{K}T_k(2I_n - L^{sym})$ with new $\widetilde{\Theta}$. Then we rewrite the $K$-layer SGC by polynomial expansion as $\sum_{k=0}^{K}T_k(2I_n - L_{sym}) X \widetilde{\Theta}$. Therefore, we can directly write the graph-shift filter $S^{sym} = 2I_n - L^{sym}$ with the same linear and shift-invariant filter $\mathscr{H}$ as $K$-localized single-layer ChebyNet.
\end{proof}
Note that SGC and GCN are identical when $K=1$.
Even though non-linearity disturbs the explicit expression of graph-shift filter of multi-layer GCN, the spectral analysis from \cite{sgc_icml19} demonstrated that both GCN and SGC share similar graph filtering behavior. Thus practically, we extend the general attack loss from multi-layer SGC to multi-layer GCN under non-linear activation functions scenario. Our experiments also validate that the attack model for multi-layer SGC also shows excellent performance on multi-layer GCN.

\textbf{\textit{GF-Attack} loss for SGC/GCN.} As stated in Corollary~\ref{thm.sgc}, the graph-shift filter $S$ of SGC/GCN is defined as $S^{sym} =  2I_n - L^{sym} = D^{-\frac{1}{2}}AD^{-\frac{1}{2}} + I_n = \hat{A} + I_n$, where $\hat{A}$ denotes the normalized adjacent matrix. Thus, for $K$-layer SGC/GCN, we can decompose the graph filter $\mathscr{H}$ as $\mathscr{H}^{sym} = (S^{sym})^K = U_{\hat{A}} (\Lambda_{\hat{A}} + I_{n})^{K} U_{\hat{A}}^{T}$, where $\Lambda_{\hat{A}}$ and $U_{\hat{A}}$ are eigen-pairs of $\hat{A}$. The corresponding adversarial attack loss for $K_{th}$ order SGC/GCN can be rewritten as:
\vspace{-1.5ex}
\begin{equation}
\arg\max\limits_{A'} \sum_{i = T + 1}^{n} (\lambda'_{\hat{A'},i} + 1)^{2K} \cdot \sum_{i = T + 1}^{n} \|\mathbf{u}^{T}_{\hat{A'},i}X\|_2^2,
\label{equ.GF-Attack-sym}
\end{equation}
where $\lambda'_{\hat{A'},i}$ refers to the $i_{th}$ largest eigenvalue of the perturbed normalized adjacent matrix $\hat{A'}$.

While each time directly calculating $\lambda'_{\hat{A'},i}$ from attacked normalized adjacent matrix $A'$ will need an eigen-decomposition operation, which is extremely time consuming, eigenvalue perturbation theory is introduced to estimate $\lambda'_{\hat{A'},i}$ in a linear time:

\begin{theorem}\label{thm:General_Eigen}
Let $A' = A + \Delta A$ be a perturbed version of $A$ by adding/removing edges and $\Delta D$ be the respective change in the degree matrix. $\lambda_{\hat{A},i}$ and $\mathbf{u}_{\hat{A},i}$ are the $i_{th}$ eigen-pair of eigenvalue and eigenvector of $\hat{A}$ and also solve the generalized eigen-problem $A\mathbf{u}_{\hat{A},i}=\lambda_{\hat{A},i} D\mathbf{u}_{\hat{A},i}$. Then the perturbed generalized eigenvalue $\lambda^{'}_{\hat{A},i}$ is approximately as:
\begin{align}
    \lambda'_{\hat{A'},i} \approx \lambda_{\hat{A},i} + \frac{ \mathbf{u}^{T}_{\hat{A},i}\Delta A\mathbf{u}_{\hat{A},i} - \lambda_{\hat{A},i}\mathbf{u}^{T}_{\hat{A},i}\Delta Du_{\hat{A},i} }{\mathbf{u}^{T}_{\hat{A},i} D \mathbf{u}_{\hat{A},i}}.
\label{equ:General_Eigen}
\end{align} 
\end{theorem}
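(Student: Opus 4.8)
The plan is to treat this as a standard first-order perturbation analysis of the generalized eigenvalue problem $A\mathbf{u}_{\hat{A},i}=\lambda_{\hat{A},i}D\mathbf{u}_{\hat{A},i}$, abbreviating the $i$-th unperturbed eigenpair as $\mathbf{u},\lambda$ to lighten notation. First I would perturb every quantity simultaneously: the adjacency matrix becomes $A+\Delta A$, the degree matrix becomes $D+\Delta D$, and the eigenpair is posited to drift to $\lambda+\Delta\lambda$ and $\mathbf{u}+\Delta\mathbf{u}$. Substituting these into the perturbed generalized eigen-equation $(A+\Delta A)(\mathbf{u}+\Delta\mathbf{u})=(\lambda+\Delta\lambda)(D+\Delta D)(\mathbf{u}+\Delta\mathbf{u})$ gives the exact identity to work from.

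Next I would expand both sides and discard all terms that are products of two first-order quantities (e.g.\ $\Delta A\,\Delta\mathbf{u}$ or $\Delta\lambda\,\Delta D\,\mathbf{u}$); this linearization is exactly what turns the exact identity into the claimed approximation. The zeroth-order terms $A\mathbf{u}$ and $\lambda D\mathbf{u}$ cancel by the unperturbed equation, leaving the first-order balance $A\Delta\mathbf{u}+\Delta A\,\mathbf{u}=\lambda D\Delta\mathbf{u}+\lambda\Delta D\,\mathbf{u}+\Delta\lambda\,D\mathbf{u}$.

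The key step, and the one subtlety that makes the formula computable, is to project this identity onto $\mathbf{u}$ by left-multiplying with $\mathbf{u}^{T}$ and then eliminate the unknown eigenvector drift $\Delta\mathbf{u}$. Because $A$ and $D$ are symmetric, transposing the unperturbed equation yields $\mathbf{u}^{T}A=\lambda\mathbf{u}^{T}D$, so the two terms involving $\Delta\mathbf{u}$, namely $\mathbf{u}^{T}A\Delta\mathbf{u}$ and $\lambda\mathbf{u}^{T}D\Delta\mathbf{u}$, are equal and cancel. This removes any dependence on $\Delta\mathbf{u}$ whatsoever. I expect this cancellation to be the crux of the argument and the precise point where the symmetry hypotheses on $A$ and $D$ are indispensable.

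What remains is the scalar equation $\mathbf{u}^{T}\Delta A\,\mathbf{u}=\lambda\,\mathbf{u}^{T}\Delta D\,\mathbf{u}+\Delta\lambda\,\mathbf{u}^{T}D\mathbf{u}$, which I would solve for $\Delta\lambda$ and add back to $\lambda$ to recover $\lambda'\approx\lambda+(\mathbf{u}^{T}\Delta A\,\mathbf{u}-\lambda\,\mathbf{u}^{T}\Delta D\,\mathbf{u})/(\mathbf{u}^{T}D\mathbf{u})$, which matches the statement once the $i$-subscripts are restored. The only loose ends worth flagging are that the denominator $\mathbf{u}^{T}D\mathbf{u}$ must be nonzero, which holds since $D$ is positive definite for a graph without isolated vertices and $\mathbf{u}\neq 0$, and that the estimate is strictly first-order, hence accurate precisely in the small-budget edge-flip regime where $\Delta A$ and $\Delta D$ are small, consistent with the attack setting.
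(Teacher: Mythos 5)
Your derivation is correct and complete. Note that the paper does not actually prove this theorem --- it simply defers to the cited reference \cite{zhu2018high} --- so your first-order perturbation argument (expand the perturbed generalized eigen-equation, drop second-order terms, project onto $\mathbf{u}^{T}$, and use the symmetry of $A$ and $D$ together with $\mathbf{u}^{T}A=\lambda\mathbf{u}^{T}D$ to cancel the $\Delta\mathbf{u}$ terms) is exactly the standard derivation that the reference relies on, and it correctly identifies the crux: the elimination of the unknown eigenvector drift. Your remarks on the nonvanishing of $\mathbf{u}^{T}D\mathbf{u}$ and on the first-order nature of the estimate are also appropriate caveats that the paper leaves implicit.
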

\begin{proof}
Please kindly refer to~\cite{zhu2018high}.
\end{proof}
With Theorem \ref{thm:General_Eigen}, we can directly derive the explicit formulation of $\lambda'_{\hat{A'}}$ perturbed by $\Delta A$ on adjacent matrix $A$.

\subsection{\textit{GF-Attack} on Sampling-based Graph Embedding}
Sampling-based graph embedding learns vertex representations according to sampled vertices, vertex sequences, or network motifs. For instance, LINE \cite{WWW2015Line} with second order proximity intends to learn two graph representation matrices $X'$, $Y'$ by maximizing the NEG loss of the skip-gram model:
\begin{equation}
\mathcal{L} = \sum_{i=1}^{|\mathcal{V}|} \sum_{j=1}^{|\mathcal{V}|} A_{i,j} \Big(\log \sigma(x'^{T}_{i} y'_{j}) + b\mathbb{E}_{j' \sim P_{N}}[\log \sigma(-x'^{T}_{i} y'_{j})] \Big),
\end{equation}
where $x'_{i}$, $y'_{i}$ are rows of $X'$, $Y'$ respectively; $\sigma$ is the sigmoid function; $b$ is the negative sampling parameter; $P_{N}$ denotes the noise distribution generating negative samples. Meanwhile, DeepWalk \cite{perozzi2014deepwalk} adopts the similar loss function except that $A_{i,j}$ is replaced with an indicator function indicating whether vertices $v_i$ and $v_j$ are sampled in the same sequence within given context window size $K$.

From the perspective of sampling-based graph embedding models, the embedded matrix is obtained by generating training corpus for the skip-gram model from adjacent matrix or a set of random walks. \cite{yang2015Comprehend,WSDM2018NetworkEmbedding} show that Point-wise Mutual Information (PMI) matrices are implicitly factorized in sampling-based embedding approaches. It indicates that LINE/DeepWalk can be rewritten into a matrix factorization form:
\begin{lemma}\label{lemma.deepwalk}\cite{WSDM2018NetworkEmbedding}
Given context window size $K$ and number of negative sampling $b$ in skip-gram, the result of DeepWalk in matrix form is equivalent to factorize matrix:
\vspace{-1ex}
\begin{equation}
M = \log{\Big(\frac{\text{vol}(G)}{bK}(\sum_{k=1}^{K}(D^{-1}A)^{k}){D}^{-1}\Big)},
\label{equ.deepwalk}
\end{equation}
where $\text{vol}(G) = \sum_{i}\sum_{j}A_{ij} = \sum_{i}D_{ii}$ denotes the volume of graph $G$. And LINE can be viewed as the special case of DeepWalk with $K=1$.
\end{lemma}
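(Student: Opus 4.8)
The plan is to follow the matrix-factorization characterization of skip-gram with negative sampling (SGNS). First I would invoke the result of Levy and Goldberg that SGNS with $b$ negative samples, run on a corpus $\mathcal{D}$ of center--context pairs, implicitly factorizes the shifted pointwise-mutual-information matrix whose $(w,c)$ entry is $\log\frac{\#(w,c)\,|\mathcal{D}|}{\#(w)\,\#(c)} - \log b$, where $\#(w,c)$, $\#(w)$, $\#(c)$ are the co-occurrence and marginal counts. The whole task then reduces to computing these counts for the specific corpus that DeepWalk generates, namely the multiset of center--context pairs harvested with window size $K$ from random walks on $G$.

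Second, I would analyze the random walk whose transition matrix is $P = D^{-1}A$. For an undirected connected graph this walk is reversible with stationary distribution $\pi_i = D_{ii}/\text{vol}(G)$, and satisfies detailed balance $\pi_w (P^r)_{w,c} = \pi_c (P^r)_{c,w}$ for every step count $r$. Letting the walk length grow, a law-of-large-numbers / ergodic argument gives the limiting frequencies: the marginal $\#(w)/|\mathcal{D}| \to \pi_w$, while the co-occurrence contribution from context offset $r$ in a fixed direction converges to $\pi_w (P^r)_{w,c}$. Partitioning $\mathcal{D}$ by the offset $r\in\{1,\dots,K\}$ and by forward/backward direction, and weighting each of the $2K$ strata equally, yields $\frac{\#(w,c)}{|\mathcal{D}|} \to \frac{1}{2K}\sum_{r=1}^{K}\big(\pi_w (P^r)_{w,c} + \pi_c (P^r)_{c,w}\big)$.

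Third, I would substitute these limits into the shifted-PMI entry and simplify. Detailed balance collapses the forward and backward terms, giving $\frac{1}{2K}\sum_r(\cdots) = \frac{1}{K}\sum_{r=1}^K \pi_w (P^r)_{w,c}$; dividing by $\pi_w\pi_c$ cancels $\pi_w$ and turns $1/\pi_c$ into $\text{vol}(G)/D_{cc}$. Writing the result in matrix form recovers exactly $\frac{\text{vol}(G)}{bK}\big(\sum_{k=1}^K (D^{-1}A)^k\big)D^{-1}$ inside the logarithm, which is the claimed $M$. Setting $K=1$ collapses the sum to the single term $(D^{-1}A)D^{-1}$ and reproduces the LINE factorization, establishing the final sentence.

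The main obstacle is the second step: making the convergence of empirical co-occurrence frequencies to their stationary closed forms rigorous. This requires the chain to mix (ergodicity), care with the equal-weighting of the $2K$ offset/direction strata and with boundary effects near the ends of each finite walk, and an appeal to reversibility to symmetrize the two directions---this is where the undirectedness of $G$ is essential and where the bookkeeping is most delicate.
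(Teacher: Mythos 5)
The paper offers no proof of its own for this lemma: it simply defers to the cited reference \cite{WSDM2018NetworkEmbedding} (Qiu et al.), and your proposal is a faithful reconstruction of exactly that reference's argument --- the Levy--Goldberg shifted-PMI characterization of SGNS, followed by computing the limiting co-occurrence statistics of the DeepWalk corpus via the stationary distribution $\pi_i = D_{ii}/\mathrm{vol}(G)$ and detailed balance of $P=D^{-1}A$. Your algebra in the third step correctly recovers $\frac{\mathrm{vol}(G)}{bK}\bigl(\sum_{k=1}^{K}(D^{-1}A)^{k}\bigr)D^{-1}$, and the convergence issue you flag in the second step is precisely the part that the cited work treats rigorously (ergodicity of the walk and vanishing boundary effects as the walk length grows), so there is no gap beyond what you have already identified as needing care.
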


For proof of Lemma~\ref{lemma.deepwalk}, please kindly refer to \cite{WSDM2018NetworkEmbedding}.
Inspired by this insight, we prove that LINE can be viewed from a GSP manner as well:
\begin{theorem}
LINE is equivalent to filter a graph signal $X = \frac{1}{b}I_{n}$ with a polynomial filter $\mathscr{H}$ and fixed parameters $\Theta=\text{vol}(G)D^{-1}$. $\mathscr{H}=S$ is constructed by graph-shift filter $S^{rw}=I_n - L^{rw}$. Equation~\eqref{equ.GF-Attack} can be rewritten as:
\begin{align}
    \notag\tilde{X} &= \frac{1}{b}(I_{n} - L^{rw})D^{-1}I_{n}, \,\,\,\,
    \notag X'         = log(\text{vol}(G)\tilde{X}).
\end{align}
\label{thm.line}
\vspace{-3mm}
\end{theorem}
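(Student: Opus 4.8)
The plan is to obtain the result directly from Lemma~\ref{lemma.deepwalk}, which already casts DeepWalk (and hence LINE, as its $K=1$ special case) as an implicit factorization of the matrix $M$ in~\eqref{equ.deepwalk}, and then to read off the components of the \textit{GF-Attack} template~\eqref{equ.GF-Attack} by specializing to $K=1$. The argument is an algebraic identification rather than a genuinely hard estimate: the real work is recognizing which object plays the role of the graph signal, which plays the role of the graph-shift filter, and which pieces are absorbed into the feature transformation $\Theta$ and the activation $\sigma$.

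First I would set $K=1$ in~\eqref{equ.deepwalk}. The inner sum then collapses, since $\sum_{k=1}^{1}(D^{-1}A)^k = D^{-1}A$, so the implicitly factorized matrix becomes $M = \log\big(\frac{\text{vol}(G)}{b}(D^{-1}A)D^{-1}\big)$. Next I would rewrite $D^{-1}A$ through the random-walk Laplacian using $L^{rw} = I_n - D^{-1}A$, giving $D^{-1}A = I_n - L^{rw} = S^{rw}$; this exposes the graph-shift filter promised in the statement. At this stage $M = \log\big(\text{vol}(G)\cdot\frac{1}{b}(I_n - L^{rw})D^{-1}\big)$, and I would match term-by-term against~\eqref{equ.GF-Attack}: the graph signal is $X = \frac{1}{b}I_n$, the polynomial filter is the degree-one filter $\mathscr{H} = S^{rw}$ (consistent with $K=1$), so that $\tilde{X} = \mathscr{H}(X) = \frac{1}{b}(I_n - L^{rw})$; the fixed feature transformation is $\Theta = \text{vol}(G)D^{-1}$; and the nonlinearity $\sigma(\cdot)$ is instantiated as $\log(\cdot)$. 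Substituting yields $X' = \sigma(\tilde{X}\Theta) = \log\big(\frac{\text{vol}(G)}{b}(I_n - L^{rw})D^{-1}\big) = M$, recovering exactly the factorized matrix of Lemma~\ref{lemma.deepwalk}, and the displayed formulas follow after folding the factor $D^{-1}$ into $\tilde{X}$.

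The one point requiring care, and the main conceptual obstacle, is justifying why the implicitly factorized PMI matrix $M$, rather than the learned embedding matrices $X',Y'$, is the correct \textit{GF-Attack} output to identify with $X'$ in~\eqref{equ.GF-Attack}. The resolution is that, by Lemma~\ref{lemma.deepwalk}, the LINE/DeepWalk embedding is precisely a low-rank factorization of $M$; thus $M$ is the full filtered signal produced by the graph filter \emph{before} the rank-reducing factorization step, which is exactly the quantity the $T$-rank approximation quality measure~\eqref{equ.loss} is built to target. Once this identification is granted, and once one accepts that the scaled identity $\frac{1}{b}I_n$ legitimately serves as an $n$-channel graph signal, the remaining verification is the routine substitution above, which completes the argument.
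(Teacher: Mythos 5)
Your proposal is correct and follows essentially the same route as the paper: both start from Lemma~\ref{lemma.deepwalk}, exponentiate $M$ with $K=1$, rewrite $D^{-1}A$ as $I_n-L^{rw}$, and read off the signal, filter, $\Theta$, and $\sigma=\log$ against Equation~\eqref{equ.GF-Attack}. Your version simply spells out the term-by-term identification (and the bookkeeping of where $D^{-1}$ lands) that the paper's one-line proof leaves implicit.
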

Note that LINE is formulated from an optimized unsupervised NEG loss of skip-gram model. 
Thus, the parameter $\Theta$ and value of the NCG loss have been fixed at the optimal point of the model with given graph signals.

We can extend Theorem~\ref{thm.line} to DeepWalk since LINE is a $1$-window special case of DeepWalk:
\begin{corollary}
The output of $K$-window DeepWalk with $b$ negative samples is equivalent to filtering a set of graph signals $X = \frac{1}{b}I_{n}$ with given parameters $\Theta=\text{vol}(G)D^{-1}$. Equation~\eqref{equ.GF-Attack} can be rewritten as:
\vspace{-2ex}
\begin{equation}
    \notag\tilde{X} = \frac{1}{bK}\sum_{k=1}^{K}(I_{n} - L^{rw})^{k}D^{-1}I_{n}, \,\,\,\,
    \notag X'         = log(\text{vol}(G)\tilde{X}). 
\end{equation}
\vspace{-6mm}
\label{col.deepwalk}
\end{corollary}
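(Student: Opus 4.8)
The plan is to reduce the claim to Lemma~\ref{lemma.deepwalk} by a direct algebraic substitution, exploiting the definition of the random walk normalized Laplacian. Recall from the Preliminary that $L^{rw} = I_n - D^{-1}A$, so the random-walk graph-shift filter obeys the identity $S^{rw} = I_n - L^{rw} = D^{-1}A$. This single identity is the bridge between the matrix-factorization viewpoint of Lemma~\ref{lemma.deepwalk} and the graph signal processing formulation of Equation~\eqref{equ.GF-Attack}, and the rest of the argument is bookkeeping.

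First I would invoke Lemma~\ref{lemma.deepwalk}, which states that $K$-window DeepWalk with $b$ negative samples implicitly factorizes the matrix $M = \log\big(\frac{\text{vol}(G)}{bK}(\sum_{k=1}^{K}(D^{-1}A)^{k})D^{-1}\big)$. Substituting $D^{-1}A = I_n - L^{rw}$ into each power gives $(D^{-1}A)^{k} = (I_n - L^{rw})^{k}$, so that $M = \log\big(\frac{\text{vol}(G)}{bK}(\sum_{k=1}^{K}(I_n - L^{rw})^{k})D^{-1}\big)$. Next I would match this expression term-by-term against the GF-Attack template $\tilde{X} = \mathscr{H}(X)$, $X' = \sigma(\tilde{X}\Theta)$: choosing the graph signal $X = \frac{1}{b}I_n$, the polynomial filter $\mathscr{H} = \frac{1}{K}\sum_{k=1}^{K}(S^{rw})^{k}$ built from $S^{rw} = I_n - L^{rw}$, the fixed parameters $\Theta = \text{vol}(G)D^{-1}$, and the activation $\sigma = \log$, yields $\tilde{X} = \frac{1}{bK}\sum_{k=1}^{K}(I_n - L^{rw})^{k}D^{-1}I_n$ and then $X' = \log(\text{vol}(G)\tilde{X}) = M$, which is exactly the factorized matrix of Lemma~\ref{lemma.deepwalk}. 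Thus the DeepWalk embedding arises from filtering $\frac{1}{b}I_n$ through $\mathscr{H}$ with the stated parameters.

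As a consistency check I would specialize to $K=1$: the sum collapses to a single term and $\tilde{X} = \frac{1}{b}(I_n - L^{rw})D^{-1}I_n$, recovering Theorem~\ref{thm.line} for LINE, which confirms the extension is coherent with the base case it generalizes. Since the correspondence is a term-by-term algebraic identification rather than an approximation, there is no genuine analytic obstacle here; the only care required is combinatorial bookkeeping --- placing the $\frac{1}{K}$ normalization inside the polynomial filter $\mathscr{H}$ and routing the factor $\text{vol}(G)D^{-1}$ into $\Theta$, so that the scalar constant $\frac{\text{vol}(G)}{bK}$ and the trailing right-multiplication by $D^{-1}$ appearing in Lemma~\ref{lemma.deepwalk} are reproduced exactly after the $\log$ activation is applied.
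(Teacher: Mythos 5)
Your proposal is correct and follows essentially the same route as the paper: the paper's own proof also invokes Lemma~\ref{lemma.deepwalk}, writes $\exp(M) = \frac{\text{vol}(G)}{b}\sum_{k=1}^{K}\frac{1}{K}(I_n - L^{rw})^{k}D^{-1}I_n$ via the identity $D^{-1}A = I_n - L^{rw}$, and reads off the GF-Attack decomposition term by term. Your version simply spells out the bookkeeping (and the $K=1$ consistency check with Theorem~\ref{thm.line}) more explicitly than the paper does.
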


\begin{proof}[Proof of Theorem \ref{thm.line} and Corollary \ref{col.deepwalk}]
With Lemma~\ref{lemma.deepwalk}, we can explicitly write DeepWalk as
$\exp{(M)} = \frac{\text{vol}(G)}{b}(\sum_{k=1}^{K} \frac{1}{K}(I_n - L^{rw})^{k}D^{-1}I_n)$. Therefore, we can directly have the explicit expression of Equation~\eqref{equ.GF-Attack} on LINE/DeepWalk.
\end{proof}

\textbf{\textit{GF-Attack} loss for LINE/DeepWalk.} As stated in Corollary~\ref{col.deepwalk}, the graph-shift filter $S$ of DeepWalk is defined as $S^{rw} =  I_n - L^{rw} = D^{-1}A =  D^{-\frac{1}{2}}\hat{A}D^{\frac{1}{2}}$. Therefore, graph filter $\mathscr{H}$ of the $K$-window DeepWalk can be decomposed as $\mathscr{H}^{rw} = \frac{1}{K}\sum_{k=1}^{K}(S^{rw})^{k}$, which satisfies $ \mathscr{H}^{rw} D^{-1}= D^{-\frac{1}{2}}U_{\hat{A}}(\frac{1}{K}\sum_{k=1}^{K}\Lambda_{\hat{A}}^k)U_{\hat{A}}^{T}D^{-\frac{1}{2}}$.

Since multiplying $D^{-\frac{1}{2}}$ in \textit{GF-Attack} loss brings extra complexity, \cite{WSDM2018NetworkEmbedding} provides us a way to well approximate the perturbed $\lambda'_{\mathscr{H}^{rw}D^{-1}}$ without this term.

Inspired by \cite{WSDM2018NetworkEmbedding},
we can find that both the magnitude of eigenvalues and smallest eigenvalue of $\mathscr{H}^{rw}D^{-1}$ are always well bounded. Thus we can approximate $\lambda'_{\mathscr{H}^{rw}D^{-1}} \approx \frac{1}{d_{\min}}\lambda'_{U_{\hat{A}}(\frac{1}{K}\sum_{k=1}^{K}\Lambda_{\hat{A}}^k)U_{\hat{A}}^{T}}$. Therefore, the corresponding adversarial attack loss of  $K_{th}$ order DeepWalk can be rewritten as:
\begin{equation}
\arg\max\limits_{A'} \sum_{i = T + 1}^{n} (\frac{1}{d_{\min}}|\frac{1}{K}\sum_{k = 1}^{K}\lambda'^{k}_{\hat{A'},i}|)^{2} \cdot \sum_{i = T + 1}^{n} \|\mathbf{u}^{T}_{\hat{A'},i}X\|_2^2.
\label{equ.GF-Attack-rw}
\end{equation}
When $K=1$, Equation~\eqref{equ.GF-Attack-rw} becomes the adversarial attack loss of LINE. Similarly, Theorem \ref{thm:General_Eigen} is utilized to estimate $\lambda'_{\hat{A'}}$ in the loss of LINE/DeepWalk.

\begin{algorithm}[!tb]
\caption{Graph Filter Attack (GF-Attack) adversarial attack algorithm under RBA setting} \label{alg:Framework}
\small{
\begin{algorithmic}[1] 

\REQUIRE ~~\\ 
Adjacent Matrix $A$; 
feature matrix $X$; 
target vertex $t$; \\
number of top-$T$ smallest singular values/vectors selected $T$; order of graph filter $K$;
fixed budget $\beta$.\\

\ENSURE ~~\\ 
Perturbed adjacent Matrix $A'$.

\STATE Initial the candidate flips set as $\mathcal{C} = \{(v, t)|v \neq t \}$, eigenvalue decomposition of $\hat{A} = U_{\hat{A}}\Lambda_{\hat{A}}U_{\hat{A}}^{T}$;

\FOR {$(v, t) \in \mathcal{C}$}

\STATE Approximate $\Lambda'_{\hat{A}}$ resulting by removing/inserting edge $(v, t)$ via Equation~\eqref{equ:General_Eigen};
\STATE {Update ${Score_{(v, t)}}$ from loss Equation~\eqref{equ.GF-Attack-sym} or Equation~\eqref{equ.GF-Attack-rw};}

\ENDFOR
\STATE $\mathcal{C}_{sel}$ $\gets$ edge flips with top-$\beta$ $Score$;
\STATE $A' \gets A \pm \mathcal{C}_{sel}$;
\RETURN{$A'$}
\end{algorithmic}
}
\end{algorithm}

\subsection{The Attack Algorithm}
Now the general attack loss is established, the goal of our adversarial attack is to misclassify a target vertex $t$ from an attributed graph $G(\mathcal{V},\mathcal{E})$ given a downstream node classification task. We start by defining the candidate flips then the general attack loss is responsible for scoring the candidates.

We first adopt the hierarchical strategy in \cite{ICML2018Adversarial} to decompose the single edge selection into two ends of this edge in practice. Then we let the candidate set $\mathcal{C}$ for edge selection contains all vertices (edges and non-edges) directly accessary to the target vertex, i.e. $\mathcal{C} = \{(v, t)|v \neq t \}$, as \cite{ICML2018Adversarial,icml2019adversarial}. Intuitively, further away the vertices from target $t$, less influence they impose on $t$.
Meanwhile, experiments in \cite{KDD2018Adversarial,icml2019adversarial} also showed that they can do significantly more damage compared to candidate flips chosen from other parts of graph. Thus, our experiments are restricted on such choices.

Overall, for a given target vertex $t$, we establish the target attack by sequentially calculating the corresponding \textit{GF-Attack} loss w.r.t graph-shift filter $S$ for each flip in candidate set as scores. Then with a fixed budget $\beta$, the adversarial attack is accomplished by selecting flips with top-$\beta$ scores as perturbations on the adjacent matrix $A$ of clean graph. Details of the \textit{GF-Attack} adversarial attack algorithm under RBA setting is in Algorithm~\ref{alg:Framework}.

\begin{table*}[!t]
\centering
\caption{Summary of the change in classification accuracy (in percent) compared to the clean/original graph. Single edge perturbation under RBA setting. Lower is better. \label{tab:results single edge}}
\resizebox{\textwidth}{!}{%
\begin{tabular}{ l c c c c c c c c c c c c }
\toprule
    Dataset & \multicolumn{4}{c}{Cora} & \multicolumn{4}{c}{Citeseer} & \multicolumn{4}{c}{Pubmed}  \\
\cmidrule(lr){2-5}\cmidrule(l){6-9}\cmidrule(l){10-13}
Models & GCN & SGC & DeepWalk & LINE & GCN & SGC & DeepWalk & LINE & GCN & SGC & DeepWalk & LINE\\
(unattacked) &    80.20 & 78.82  & 77.23 & 76.75	& 72.50 & 69.68  & 69.68 & 65.15 		& 80.40 & 80.21	& 78.69 & 72.12 \\
\hline
\textit{Random} & -1.90 & -1.22  & -1.76 & -1.84	& -2.86 & -1.47  & -6.62 & -1.78 		& -1.75 & -1.77	& -1.25 & -1.01   \\
\textit{Degree} & -2.21 & -4.42  & -3.08 & -12.40	& -4.68 & -5.21  & -9.67 & -12.55 	& -3.86 & -4.44	& -2.43 & -13.05  \\
\textit{RL-S2V} & -5.20 & -5.62  & -5.24 & -10.38	& -6.50 & -4.08  & -12.13 & -20.10	& -6.40 & -6.11 & -6.10 & -13.21\\
\textit{$\mathcal{A}_{class}$} & -3.62 & -2.96  & \textbf{-6.29} & -7.55	& -3.48 & -2.83  & \textbf{-12.56} & -10.28	& -4.21 & -2.25 & -3.05 & -6.75 \\
\midrule
\textit{GF-Attack} &  \textbf{-7.60}  & \textbf{-9.73} & -5.31 & \textbf{-13.27} & \textbf{-7.78} & \textbf{-6.19} & -12.50 & \textbf{-22.11} & \textbf{-7.96} & \textbf{-7.20} & \textbf{-7.43} & \textbf{-14.16}\\
\bottomrule
\end{tabular}
}
\end{table*}

\section{Experiments}\label{sec.exp}
\textbf{Datasets.}
We evaluate our approach on three real-world datasets: Cora \cite{Dataset2000Cora}, Citeseer and Pubmed \cite{Dataset2008Citeseer}. In all three citation network datasets, vertices are documents with corresponding bag-of-words features and edges are citation links. The data preprocessing settings are closely followed the benchmark setup in \cite{ICLR2017SemiGCN}.
Only the largest connected component (LCC) is considered to be consistent with~\cite{KDD2018Adversarial}.
For statistical overview of datasets, please kindly refer to~\cite{KDD2018Adversarial}. Code and datasets are available at https://github.com/SwiftieH/GFAttack.



\textbf{Baselines.}
In current literatures, few of studies strictly follow the restricted black-box attack setting. They utilize the additional information to help construct the attackers, such as labels \cite{KDD2018Adversarial}, gradients \cite{ICML2018Adversarial}, etc. 

Hence, we compare four baselines with the proposed attacker under RBA setting as follows:
\begin{itemize}[noitemsep,topsep=0pt,parsep=0pt,partopsep=0pt]
\item \textit{Random} \cite{ICML2018Adversarial}:  for each perturbation, randomly choosing insertion or removing of an edge in graph $G$. We report averages over 10 different seeds to alleviate the influence of randomness.
\item \textit{Degree} \cite{CIKM2012Gelling}:  for each perturbation, inserting or removing an edge based on degree centrality, which is equivalent to the sum of degrees in original graph $G$.
\item \textit{RL-S2V} \cite{ICML2018Adversarial}: a reinforcement learning based attack method, which learns the generalizable attack policy for GCN under RBA scenario.
\item \textit{$\mathcal{A}_{class}$} \cite{icml2019adversarial}: a matrix perturbation theory based black-box attack method designed for DeepWalk. Then \textit{$\mathcal{A}_{class}$} evaluates the targeted 
attacks on node classification by learning a logistic regression.
\end{itemize}

\textbf{Target Models.}
To validate the generalization ability of our proposed attacker, we choose four popular graph embedding models: GCN \cite{ICLR2017SemiGCN}, SGC \cite{sgc_icml19}, DeepWalk \cite{perozzi2014deepwalk} and LINE \cite{WWW2015Line} for evaluation. First two of them are Graph Convolutional Networks and the others are sampling-based graph embedding methods. For DeepWalk, the hyperparameters are set to commonly used values: window size as $5$, number of negative sampling in skip-gram as $5$ and top-$128$ largest singular values/vectors. A logistic regression classifier is connected to the output embeddings of sampling-based methods for classification. Unless otherwise stated, all Graph Convolutional Networks contain two layers. 

\textbf{Attack Configuration.}
A small budget $\beta$ is applied to regulate all the attackers. To make this attacking task more challenging, $\beta$ is set to 1. Specifically, the attacker is limited to only add/delete a single edge given a target vertex $t$. For our method, we set the parameter $T$ in our general attack model as $n - T = 128$, which means that we choose the top-$T$ smallest eigenvalues for $T$-rank approximation in embedding quality measure. Unless otherwise indicated, the order of graph filter in \textit{GF-Attack} model is set to $K=2$. Following the setting in \cite{KDD2018Adversarial}, we split the graph into labeled (20\%) and unlabeled vertices (80\%). Further, the labeled vertices are splitted into equal parts for training and validation. The labels and classifier is invisible to the attacker due to the RBA setting. The attack performance is evaluated by the decrease of node classification accuracy following \cite{ICML2018Adversarial}.

\begin{figure*}[!t]
\centering
\subfigure {\includegraphics[width=0.24\linewidth]{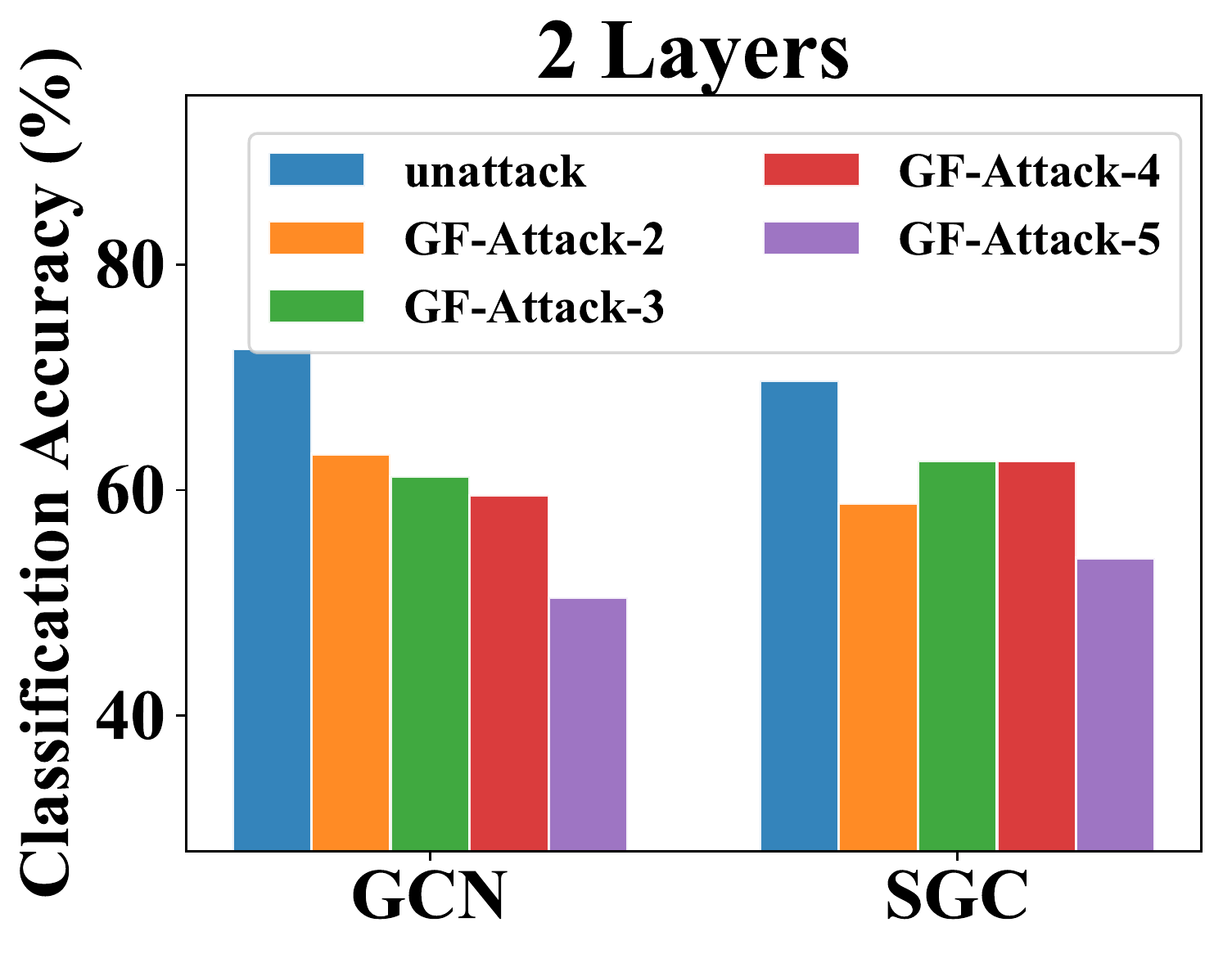}}
\subfigure {\includegraphics[width=0.24\linewidth]{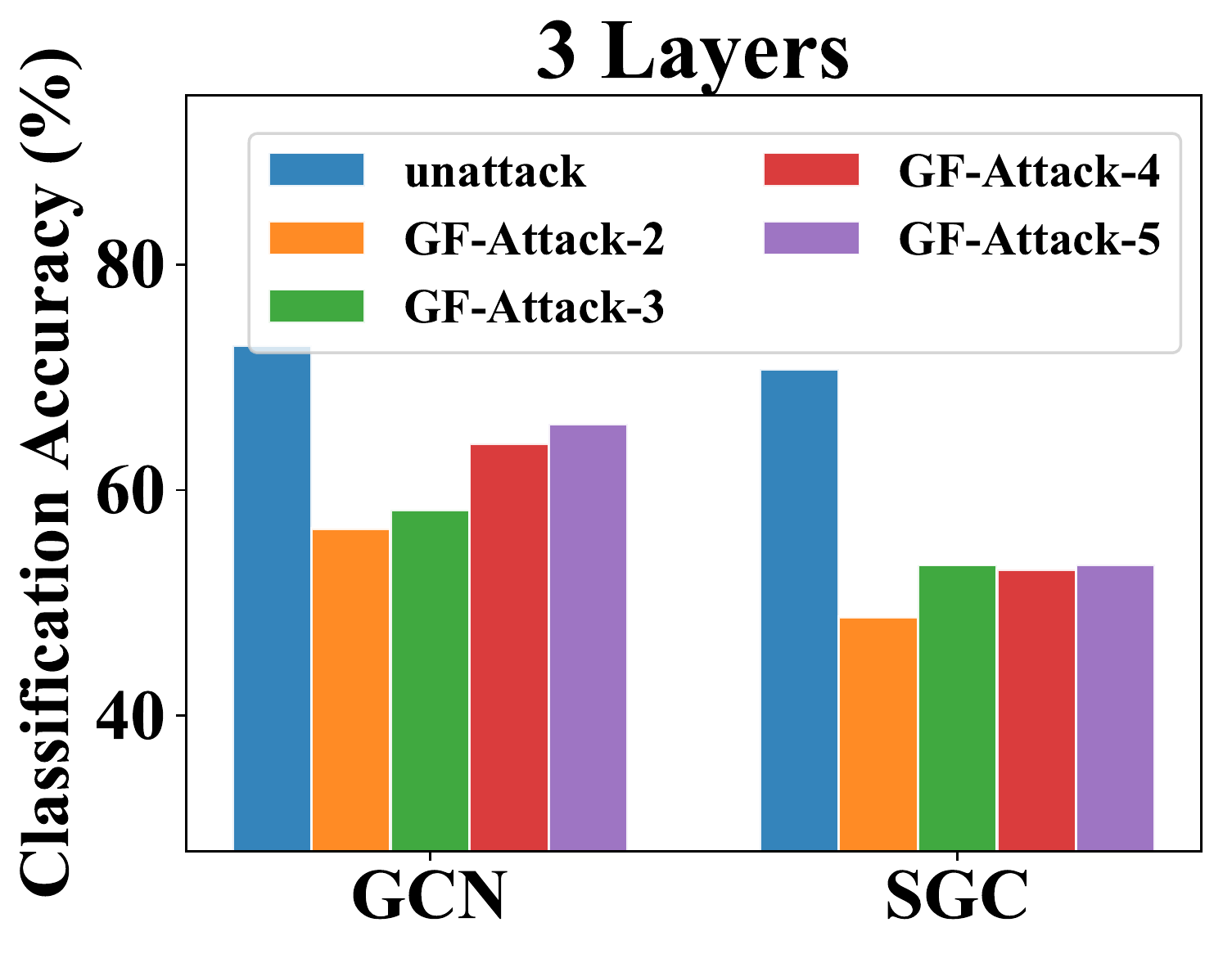}}
\subfigure {\includegraphics[width=0.24\linewidth]{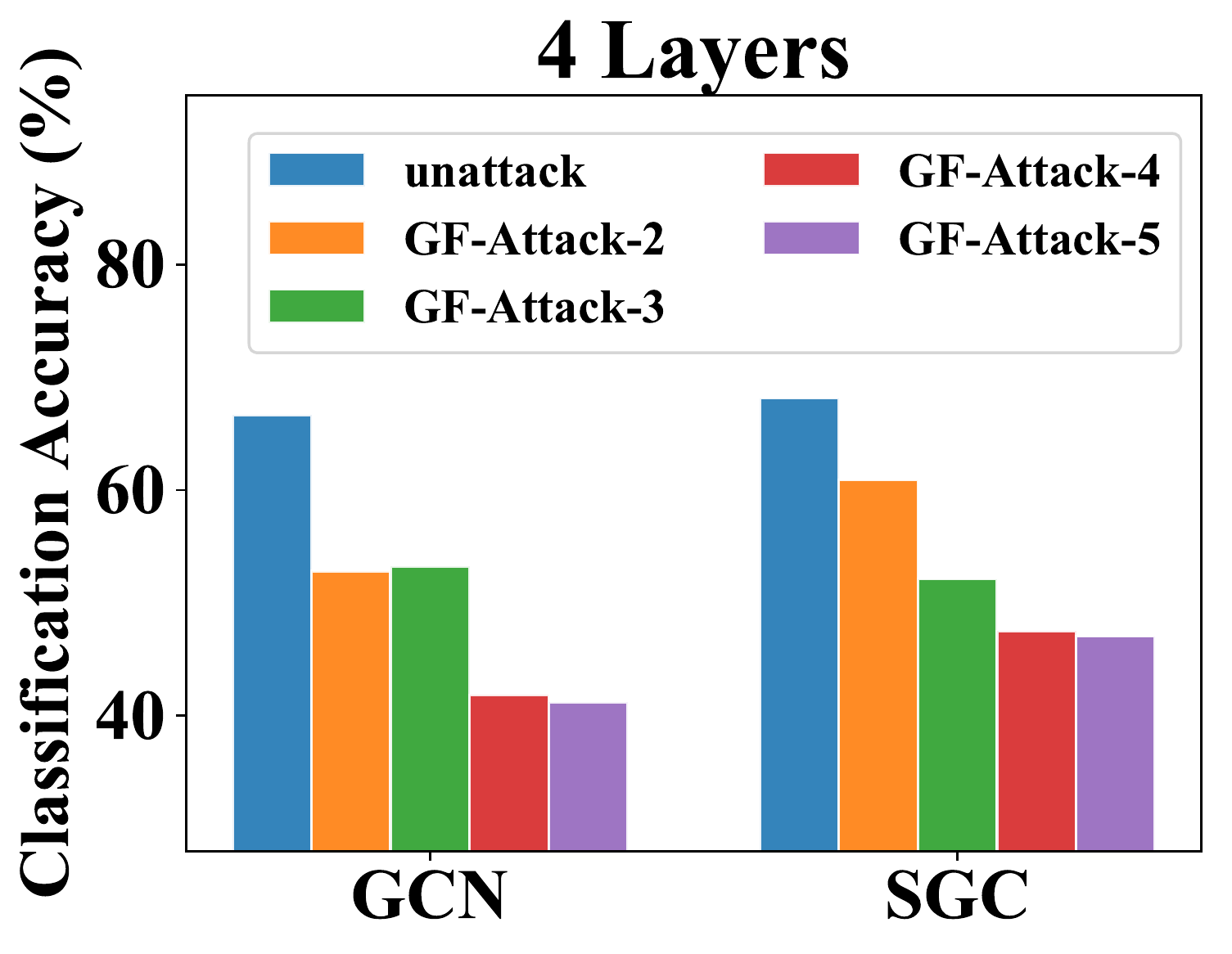}}
\subfigure {\includegraphics[width=0.24\linewidth]{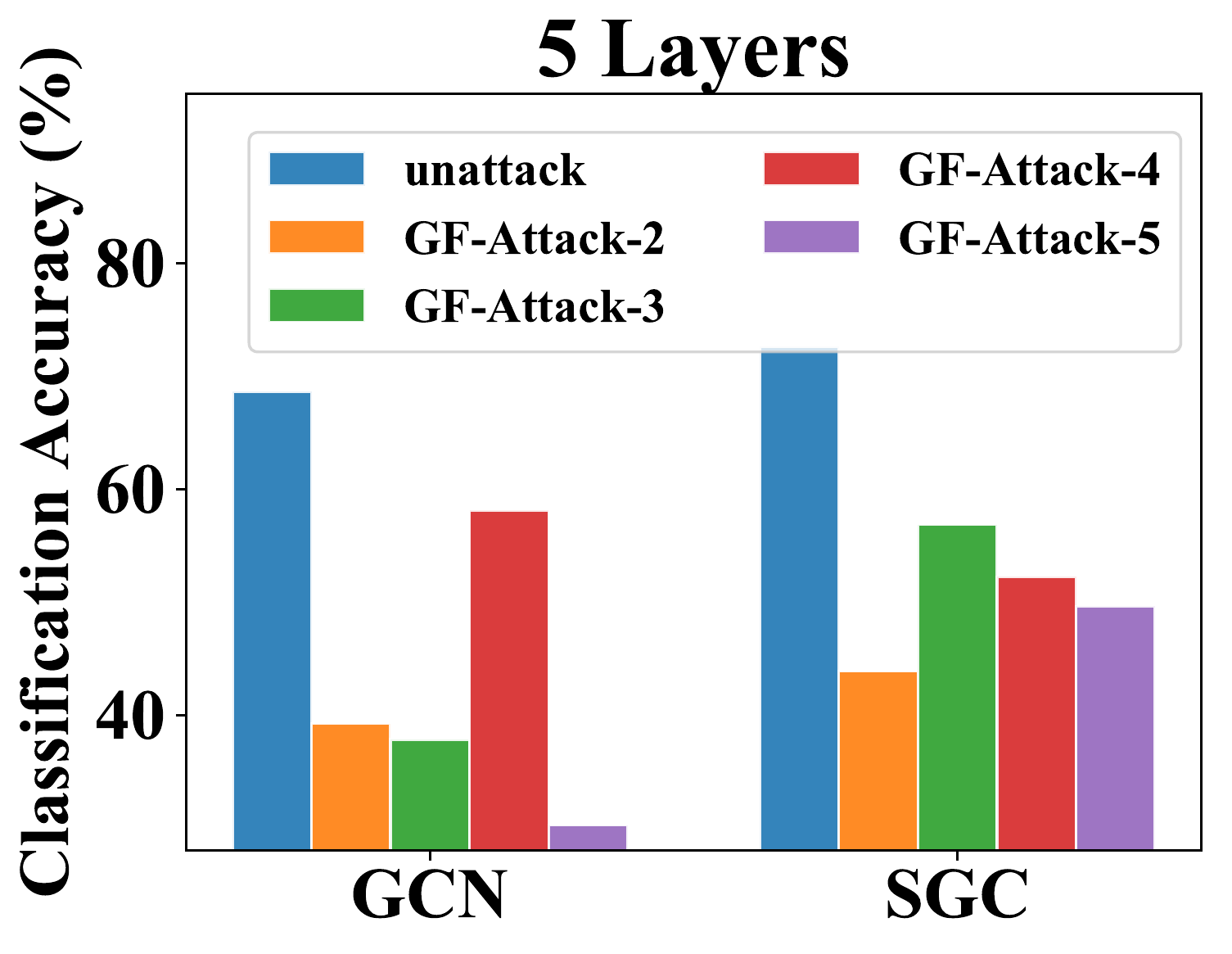}}
\vspace{-2mm}
\caption{Comparison between order $K$ of \textit{GF-Attack} and number of layers in GCN/SGC on Citeseer.}
	\label{fig:layer vs order}
\end{figure*}

\begin{table}[htbp]
\renewcommand{\arraystretch}{1.11}
\caption{Running time ($s$) comparison over all baseline methods on Citeseer. We report the $10$ times average running time of processing single node for each model.} \label{tab:timeCost}
\vspace{-2mm}
\begin{center}
\resizebox{\columnwidth}{!}{%
\begin{tabular}{ c c c c c c }
\hline
Models  & \textit{Random} & \textit{Degree} & \textit{RL-S2V} & \textit{$\mathcal{A}_{class}$} & \textit{GF-Attack} \\
\hline
Citeseer    & 0.19      & 42.21     & 222.80  &146.58 & 12.78   \\
\hline
\end{tabular}
}
\end{center}
\end{table}

\subsection{Attack Performance Evaluation}
In the section, we evaluate the overall attack performance of different attackers.

\textbf{Attack on Graph Convolutional Networks.}
Table \ref{tab:results single edge} summaries the attack results of different attackers on Graph Convolutional Networks. Our \textit{GF-Attack} attacker outperforms other attackers on all datasets and all models. Moreover, \textit{GF-Attack} performs quite well on 2 layers GCN with nonlinear activation. This implies the generalization ability of our attacker on Graph Convolutional Networks. 

\textbf{Attack on Sampling-based Graph Embedding.}
Table \ref{tab:results single edge} also summaries the attack results of different attackers on sampling-based graph embedding models. As expected, our attacker achieves the best performance nearly on all target models. It validates the effectiveness of our method on attacking sampling-based models. 

Another interesting observation is that the attack performance on LINE is much better than that on DeepWalk. This result may due to the deterministic structure of LINE, while the random sampling procedure in DeepWalk may help raise the resistance to adversarial attack. Moreover, \textit{GF-Attack} on all graph filters successfully drop the classification accuracy on both Graph Convolutional Networks and sampling-based models, which again indicates the transferability of our general model in practice.

\subsection{Evaluation of Multi-layer GCNs}
To further inspect the transferability of our attacker, we conduct attack towards multi-layer Graph Convolutional Networks w.r.t the order of graph filter in \textit{GF-Attack} model. 
Figure~\ref{fig:layer vs order} presents the attacking results on $2$, $3$, $4$ and $5$ layers GCN and SGC with different orders, and the number followed by \textit{GF-Attack} indicates the graph-shift filter order $K$ in general attack loss. 
From Figure~\ref{fig:layer vs order}, we can observe that: first, the transferability of our general model is demonstrated, since all graph-shift filters in loss with different order $K$ can perform the effective attack on all models. Interestingly, \textit{GF-Attack-5} achieves the best attacking performance in most cases. It implies that the higher order filter contains higher order information and has positive effects on attack to simpler models. Second, the attacking performance on SGC is always better than GCN under all settings. We conjecture that the non-linearity between layers in GCN successively adding robustness to GCN.

\subsection{Evaluation under Multi-edge Perturbation Settings}
In this section, we evaluate the performance of attackers with multi-edge perturbation, i.e. $\beta \geq 1$. 
The results of multi-edge perturbations on Cora under RBA setting are reported in Figure~\ref{fig:multi-edge perturbation} for demonstration.
Clearly, with increasing of the number of perturbed edges, the attacking performance gets better for each attacker. Our attacker outperforms other baselines on all cases. It validates that our general attacker can still perform well when fixed budget $\beta$ becomes larger.

\begin{figure}[htbp]
\centering
\subfigure [small][GCN]{\includegraphics[width=0.48\linewidth]{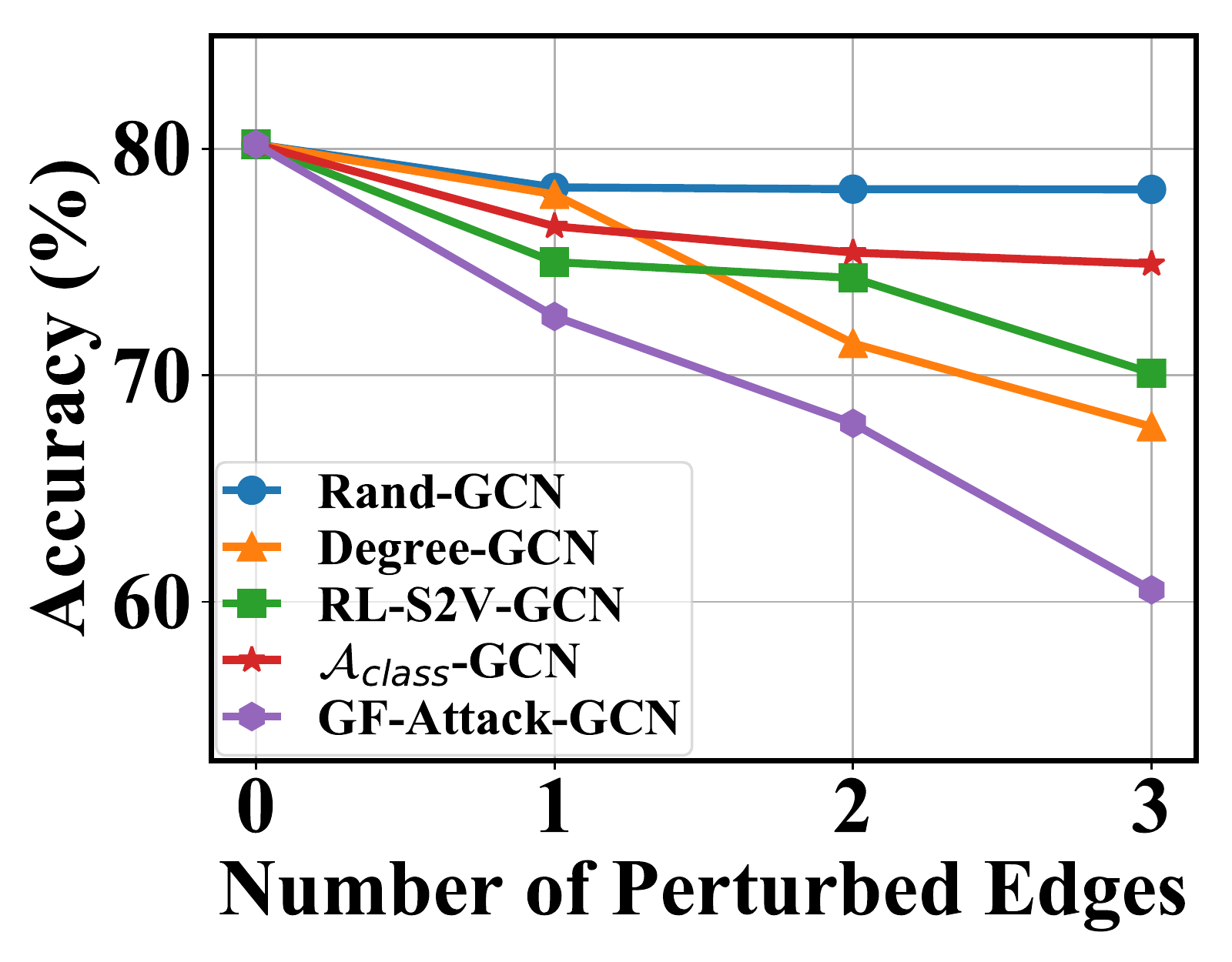}}
\subfigure [small][SGC]{\includegraphics[width=0.48\linewidth]{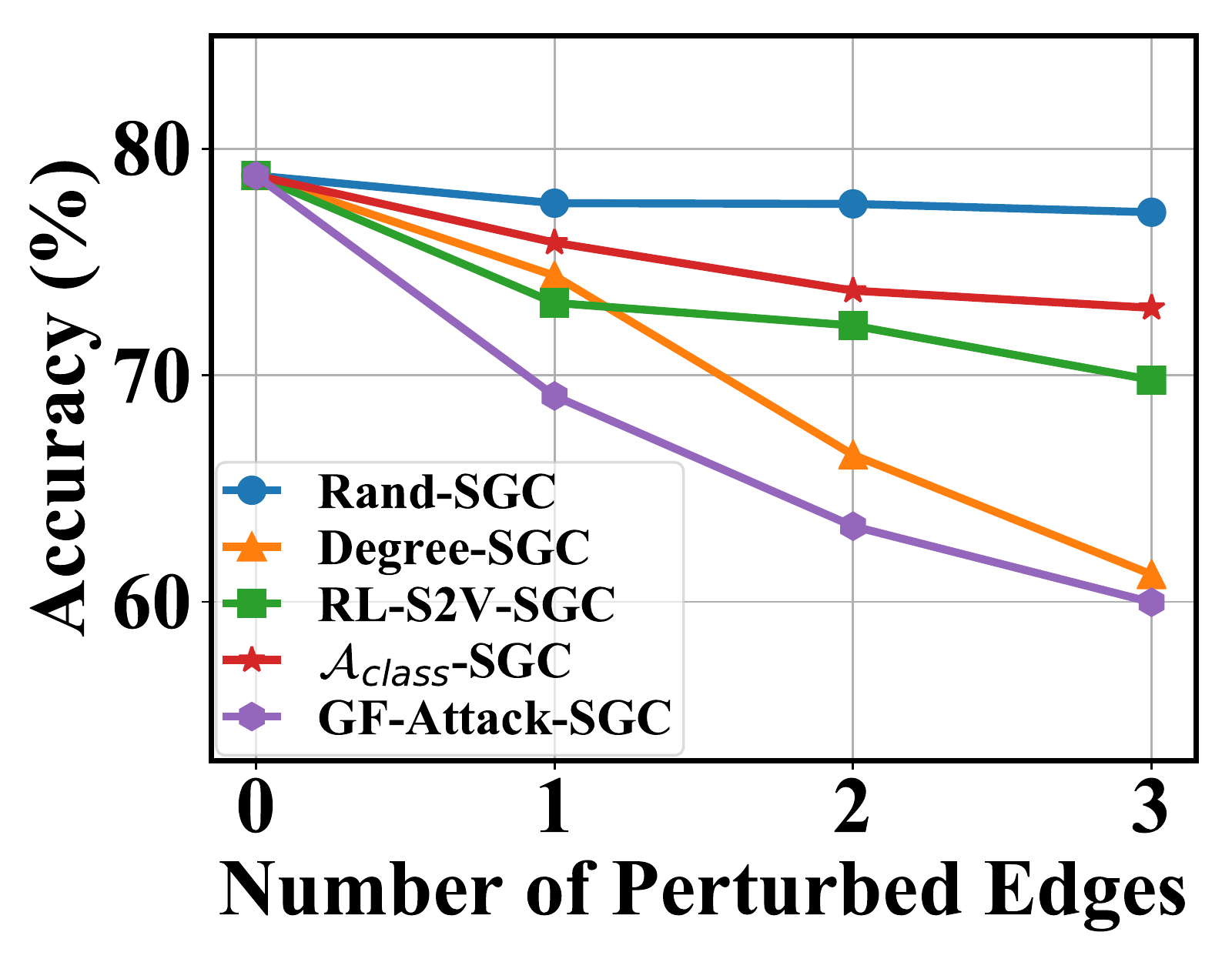}}
\caption{Multiple-edge attack results on Cora under RBA setting. Lower is better.}
\label{fig:multi-edge perturbation}
\end{figure}

\subsection{Computational Efficiency Analysis}
In this section, we empirically evaluate the computational efficiency of our \textit{GF-Attack}. The running time ($s$) comparison of $10$ times average on Citeseer is demonstrated in Table~\ref{tab:timeCost}. While being less efficient than two native baselines (\textit{Random} and \textit{Degree}), our \textit{GF-Attack} is much faster than the developed methods \textit{RL-S2V} and \textit{$\mathcal{A}_{class}$}. Joining the performance in Table~\ref{tab:results single edge}, it reads that \textit{GF-Attack} is not only effective in performance but also efficient computationally.

\section{Conclusion}\label{sec.conclusion}
In this paper, we consider the adversarial attack on different kinds of graph embedding models under restrict black-box attack scenario. From graph signal processing of view, we try to formulate the graph embeddding method as a general graph signal process with corresponding graph filter and construct a restricted adversarial attacker which aims to attack the graph filter only by the adjacency matrix and feature matrix. Thereby, a general optimization problem is constructed by measuring embedding quality and an effective algorithm is derived accordingly to solve it. Experiments show the vulnerability of different kinds of graph embedding models to our attack framework.

\section{Acknowledgements}\label{sec.acknowledgment}

This work is supported by National Natural Science Foundation of China Major Project No. U1611461 and National Program on Key Basic Research Project No. 2015CB352300. We would like to thank the anonymous reviewers for the helpful comments. We also thank Daniel Zügner from Technical University of Munich for the valuable suggestions and discussions.

\bibliographystyle{aaai}
\bibliography{GF_Attack}

\begin{thebibliography}{}

\bibitem[\protect\citeauthoryear{{Akhtar} and {Mian}}{2018}]{akhtar2018threat}
{Akhtar}, N., and {Mian}, A.~S.
\newblock 2018.
\newblock Threat of adversarial attacks on deep learning in computer vision: A
  survey.
\newblock {\em IEEE Access} 6:14410--14430.

\bibitem[\protect\citeauthoryear{Bojchevski and
  G{\"u}nnemann}{2019}]{icml2019adversarial}
Bojchevski, A., and G{\"u}nnemann, S.
\newblock 2019.
\newblock Adversarial attacks on node embeddings via graph poisoning.
\newblock In {\em International Conference on Machine Learning},  695--704.

\bibitem[\protect\citeauthoryear{Cui \bgroup et al\mbox.\egroup
  }{2018}]{cui2018survey}
Cui, P.; Wang, X.; Pei, J.; and Zhu, W.
\newblock 2018.
\newblock A survey on network embedding.
\newblock {\em IEEE Transactions on Knowledge and Data Engineering}
  31(5):833--852.

\bibitem[\protect\citeauthoryear{{Dai} \bgroup et al\mbox.\egroup
  }{2018}]{ICML2018Adversarial}
{Dai}, H.; {Li}, H.; {Tian}, T.; {Huang}, X.; {Wang}, L.; {Zhu}, J.; and
  {Song}, L.
\newblock 2018.
\newblock Adversarial attack on graph structured data.
\newblock {\em international conference on machine learning}  1115--1124.

\bibitem[\protect\citeauthoryear{{Defferrard}, {Bresson}, and
  {Vandergheynst}}{2016}]{Defferrard2016ChebNet}
{Defferrard}, M.; {Bresson}, X.; and {Vandergheynst}, P.
\newblock 2016.
\newblock Convolutional neural networks on graphs with fast localized spectral
  filtering.
\newblock {\em neural information processing systems}  3844--3852.

\bibitem[\protect\citeauthoryear{{Duvenaud} \bgroup et al\mbox.\egroup
  }{2015}]{duvenaud2015convolutional}
{Duvenaud}, D.~K.; {Maclaurin}, D.; {Aguilera-Iparraguirre}, J.;
  {Gómez-Bombarelli}, R.; {Hirzel}, T.; {Aspuru-Guzik}, A.; and {Adams}, R.~P.
\newblock 2015.
\newblock Convolutional networks on graphs for learning molecular fingerprints.
\newblock {\em neural information processing systems}  2224--2232.

\bibitem[\protect\citeauthoryear{{Hamilton}, {Ying}, and
  {Leskovec}}{2017}]{Hamilton2017Inductive}
{Hamilton}, W.~L.; {Ying}, Z.; and {Leskovec}, J.
\newblock 2017.
\newblock Inductive representation learning on large graphs.
\newblock {\em neural information processing systems}  1024--1034.

\bibitem[\protect\citeauthoryear{{Kipf} and {Welling}}{2017}]{ICLR2017SemiGCN}
{Kipf}, T.~N., and {Welling}, M.
\newblock 2017.
\newblock Semi-supervised classification with graph convolutional networks.
\newblock {\em international conference on learning representations}.

\bibitem[\protect\citeauthoryear{{McCallum} \bgroup et al\mbox.\egroup
  }{2000}]{Dataset2000Cora}
{McCallum}, A.~K.; {Nigam}, K.; {Rennie}, J.; and {Seymore}, K.
\newblock 2000.
\newblock Automating the construction of internet portals with machine
  learning.
\newblock {\em Information Retrieval} 3(2):127--163.

\bibitem[\protect\citeauthoryear{Nar \bgroup et al\mbox.\egroup
  }{2019}]{nar2019cross}
Nar, K.; Ocal, O.; Sastry, S.~S.; and Ramchandran, K.
\newblock 2019.
\newblock Cross-entropy loss and low-rank features have responsibility for
  adversarial examples.
\newblock {\em arXiv preprint arXiv:1901.08360}.

\bibitem[\protect\citeauthoryear{Ortega \bgroup et al\mbox.\egroup
  }{2018}]{ortega2018graph}
Ortega, A.; Frossard, P.; Kova{\v{c}}evi{\'c}, J.; Moura, J.~M.; and
  Vandergheynst, P.
\newblock 2018.
\newblock Graph signal processing: Overview, challenges, and applications.
\newblock {\em Proceedings of the IEEE} 106(5):808--828.

\bibitem[\protect\citeauthoryear{{Paranjape}, {Benson}, and
  {Leskovec}}{2017}]{paranjape2017motifs}
{Paranjape}, A.; {Benson}, A.~R.; and {Leskovec}, J.
\newblock 2017.
\newblock Motifs in temporal networks.
\newblock In {\em Proceedings of the Tenth ACM International Conference on Web
  Search and Data Mining},  601--610.

\bibitem[\protect\citeauthoryear{Perozzi, Al-Rfou, and
  Skiena}{2014}]{perozzi2014deepwalk}
Perozzi, B.; Al-Rfou, R.; and Skiena, S.
\newblock 2014.
\newblock Deepwalk: Online learning of social representations.
\newblock In {\em Proceedings of the 20th ACM SIGKDD international conference
  on Knowledge discovery and data mining},  701--710.
\newblock ACM.

\bibitem[\protect\citeauthoryear{{Qiu} \bgroup et al\mbox.\egroup
  }{2018}]{WSDM2018NetworkEmbedding}
{Qiu}, J.; {Dong}, Y.; {Ma}, H.; {Li}, J.; {Wang}, K.; and {Tang}, J.
\newblock 2018.
\newblock Network embedding as matrix factorization: Unifying deepwalk, line,
  pte, and node2vec.
\newblock In {\em Proceedings of the Eleventh ACM International Conference on
  Web Search and Data Mining},  459--467.

\bibitem[\protect\citeauthoryear{{Scarselli} \bgroup et al\mbox.\egroup
  }{2009}]{scarselli2009GNN}
{Scarselli}, F.; {Gori}, M.; {Tsoi}, A.~C.; {Hagenbuchner}, M.; and
  {Monfardini}, G.
\newblock 2009.
\newblock The graph neural network model.
\newblock {\em IEEE Transactions on Neural Networks} 20(1):61--80.

\bibitem[\protect\citeauthoryear{{Sen} \bgroup et al\mbox.\egroup
  }{2008}]{Dataset2008Citeseer}
{Sen}, P.; {Namata}, G.~M.; {Bilgic}, M.; {Getoor}, L.; {Gallagher}, B.; and
  {Eliassi-Rad}, T.
\newblock 2008.
\newblock Collective classification in network data.
\newblock {\em Ai Magazine} 29(3):93--106.

\bibitem[\protect\citeauthoryear{{Shuman} \bgroup et al\mbox.\egroup
  }{2013}]{shuman2013GSP}
{Shuman}, D.~I.; {Narang}, S.~K.; {Frossard}, P.; {Ortega}, A.; and
  {Vandergheynst}, P.
\newblock 2013.
\newblock The emerging field of signal processing on graphs: Extending
  high-dimensional data analysis to networks and other irregular domains.
\newblock {\em IEEE Signal Processing Magazine} 30(3):83--98.

\bibitem[\protect\citeauthoryear{{Sun} \bgroup et al\mbox.\egroup
  }{2018}]{sun2018adversarial}
{Sun}, L.; {Wang}, J.; {Yu}, P.~S.; and {Li}, B.
\newblock 2018.
\newblock Adversarial attack and defense on graph data: A survey.
\newblock {\em arXiv preprint arXiv:1812.10528}.

\bibitem[\protect\citeauthoryear{{Tang} \bgroup et al\mbox.\egroup
  }{2015}]{WWW2015Line}
{Tang}, J.; {Qu}, M.; {Wang}, M.; {Zhang}, M.; {Yan}, J.; and {Mei}, Q.
\newblock 2015.
\newblock Line: Large-scale information network embedding.
\newblock In {\em Proceedings of the 24th International Conference on World
  Wide Web},  1067--1077.

\bibitem[\protect\citeauthoryear{{Tong} \bgroup et al\mbox.\egroup
  }{2012}]{CIKM2012Gelling}
{Tong}, H.; {Prakash}, B.~A.; {Eliassi-Rad}, T.; {Faloutsos}, M.; and
  {Faloutsos}, C.
\newblock 2012.
\newblock Gelling, and melting, large graphs by edge manipulation.
\newblock In {\em Proceedings of the 21st ACM international conference on
  Information and knowledge management},  245--254.

\bibitem[\protect\citeauthoryear{Tu \bgroup et al\mbox.\egroup
  }{2018}]{tu2018deep}
Tu, K.; Cui, P.; Wang, X.; Yu, P.~S.; and Zhu, W.
\newblock 2018.
\newblock Deep recursive network embedding with regular equivalence.
\newblock In {\em Proceedings of the 24th ACM SIGKDD International Conference
  on Knowledge Discovery \& Data Mining},  2357--2366.
\newblock ACM.

\bibitem[\protect\citeauthoryear{Wu \bgroup et al\mbox.\egroup
  }{2019}]{sgc_icml19}
Wu, F.; Zhang, T.; Souza~Jr., A.~H.; Fifty, C.; Yu, T.; and Weinberger, K.~Q.
\newblock 2019.
\newblock Simplifying graph convolutional networks.
\newblock In {\em Proceedings of the 36th International Conference on Machine
  Learning (ICML)}.

\bibitem[\protect\citeauthoryear{{Xu} \bgroup et al\mbox.\egroup
  }{2018}]{xu2018how}
{Xu}, K.; {Hu}, W.; {Leskovec}, J.; and {Jegelka}, S.
\newblock 2018.
\newblock How powerful are graph neural networks.
\newblock {\em arXiv preprint arXiv:1810.00826}.

\bibitem[\protect\citeauthoryear{{Xu} \bgroup et al\mbox.\egroup
  }{2019}]{xu2019topology}
{Xu}, K.; {Chen}, H.; {Liu}, S.; {Chen}, P.-Y.; {Weng}, T.-W.; {Hong}, M.; and
  {Lin}, X.
\newblock 2019.
\newblock Topology attack and defense for graph neural networks: An
  optimization perspective.
\newblock {\em arXiv preprint arXiv:1906.04214}.

\bibitem[\protect\citeauthoryear{{Yang} and {Liu}}{2015}]{yang2015Comprehend}
{Yang}, C., and {Liu}, Z.
\newblock 2015.
\newblock Comprehend deepwalk as matrix factorization.
\newblock {\em arXiv preprint arXiv:1501.00358}.

\bibitem[\protect\citeauthoryear{Yang \bgroup et al\mbox.\egroup
  }{2015}]{yang2015network}
Yang, C.; Liu, Z.; Zhao, D.; Sun, M.; and Chang, E.
\newblock 2015.
\newblock Network representation learning with rich text information.
\newblock In {\em Twenty-Fourth International Joint Conference on Artificial
  Intelligence}.

\bibitem[\protect\citeauthoryear{{Zhu} \bgroup et al\mbox.\egroup
  }{2018}]{zhu2018high}
{Zhu}, D.; {Cui}, P.; {Zhang}, Z.; {Pei}, J.; and {Zhu}, W.
\newblock 2018.
\newblock High-order proximity preserved embedding for dynamic networks.
\newblock {\em IEEE Transactions on Knowledge and Data Engineering}
  30:2134--2144.

\bibitem[\protect\citeauthoryear{{Zügner}, {Akbarnejad}, and
  {Günnemann}}{2018}]{KDD2018Adversarial}
{Zügner}, D.; {Akbarnejad}, A.; and {Günnemann}, S.
\newblock 2018.
\newblock Adversarial attacks on neural networks for graph data.
\newblock In {\em Proceedings of the 24th ACM SIGKDD International Conference
  on Knowledge Discovery \& Data Mining},  2847--2856.

\bibitem[\protect\citeauthoryear{{Zügner} and
  {Günnemann}}{2019}]{ICLR2019Meta}
{Zügner}, D., and {Günnemann}, S.
\newblock 2019.
\newblock Adversarial attacks on graph neural networks via meta learning.
\newblock {\em international conference on learning representations}.

\end{thebibliography}

\end{document}